\documentclass[aps,prl,reprint,superscriptaddress,nofootinbib,longbibliography]{revtex4-2}

\usepackage{amsmath,amssymb,bm}
\usepackage[colorlinks=true,linkcolor=blue,citecolor=blue,urlcolor=blue]{hyperref}
\usepackage[utf8]{inputenc}
\usepackage{graphicx}
\usepackage{physics}
\usepackage{amsthm}
\usepackage{xcolor}
\newcommand{\id}{\mathbb{I}}
\newcommand{\val}{\operatorname{val}}
\newcommand{\eps}{\varepsilon}
\newcommand{\dist}{\operatorname{dist}}

\newtheorem{lemma}{Lemma}
\newtheorem{theorem}{Theorem}

\makeatletter
\renewcommand\paragraph{\@startsection{paragraph}{4}%
  {\z@}%
  {1.25ex \@plus1ex \@minus.2ex}%
  {-1em}%
  {\normalfont\bfseries}}
\makeatother

\makeatletter
\newcommand{\manualref}[1]{\def\@currentlabel{#1}}
\makeatother

\begin{document}

\title{Adiabatic Quantum Phase Estimation}

\author{Alexander Schmidhuber}
\email{alexsc@mit.edu}
\affiliation{Center for Theoretical Physics, Massachusetts Institute of Technology}

\author{Seth Lloyd}
\email{slloyd@mit.edu}
\affiliation{Department of Mechanical Engineering, Massachusetts Institute of Technology}

\date{\today}

\begin{abstract}
Quantum phase estimation (QPE) is a central algorithmic primitive which estimates eigenvalues of a Hamiltonian up to precision $\eps$ in Heisenberg-limited time $T=\Theta(1/\eps)$. Standard gate-based implementations of QPE require deep controlled time-evolution circuits and are not native to analog hardware.
Here, we present a simple adiabatic protocol for QPE that 
achieves (up to logarithmic factors) the optimal Heisenberg-limited scaling $T = O\left( \frac{1}{\eps} \log\left(\delta^{-1}\right)\right)$ in both the precision $\eps$ and failure probability $\delta$.
By encoding eigenvalues in the populations of computational basis states rather than complex phases, our approach is naturally robust against certain dephasing errors. The adiabatic protocol only requires the ability to couple a single ancilla qubit to the system Hamiltonian as well as pairwise couplings within the ancilla.  
\end{abstract}

\maketitle

\section{Introduction}
Quantum Phase Estimation (QPE) is a fundamental quantum primitive that  underlies the majority of existing quantum applications, including quantum simulation, integer factorization, quantum chemistry, and linear-systems methods~\cite{Kitaev1995,Cleve1998,NielsenChuang,Shor1994,AbramsLloyd1999,AspuruGuzik2005,HHL2009}.
For a Hamiltonian $H_S$ with eigenpairs $H_S\ket{\alpha}=\lambda_\alpha\ket{\alpha}$ and an arbitrary initial state $\ket{\psi}_\mathcal{S}=\sum_\alpha c_\alpha\ket{\alpha}$, idealized QPE implements the operation
\begin{equation}
\sum_\alpha c_\alpha\ket{\alpha}\ket{0}
\mapsto
\sum_\alpha c_\alpha\ket{\alpha}\ket{\widetilde\lambda_\alpha},
\label{eq:QPEact}
\end{equation}
where $\widetilde\lambda_\alpha$ is an $m$-bit estimate of $\lambda_\alpha$. Gate-based QPE achieves additive accuracy $\eps\simeq2^{-m}$ in time
$T=\Theta(2^m)=\Theta(1/\eps)$, saturating the optimal tradeoff between
time, energy, and precision known as the Heisenberg limit~\cite{GLM2006}.

Standard gate-based implementations of QPE require a Quantum Fourier Transform and deep controlled powers of \(e^{-iH_St}\), typically through product formulas, block-encodings, or qubitization methods~\cite{Lloyd1996,BerryChildsKothari2015,LowChuang2019}. Such digital constructions are powerful but expensive and not native to analog platforms in which \(H_S\) is available as a physical interaction. This motivates an adiabatic implementation of QPE, which avoids Trotterization and is naturally robust to certain control and dephasing errors.

In this paper, we present a conceptually simple adiabatic protocol for QPE. Its time complexity is $T = \widetilde{O} \left(\frac{1}{\eps} \log\frac{1}{\delta}\right)$, which is optimal (up to logarithmic factors) in both the accuracy $\eps$ and the failure probability $\delta$~\cite{mande2023tight}. Our protocol requires the ability to couple the system Hamiltonian to a single ancilla qubit at a time via the interaction $H_S \otimes Z_j$, as well as pairwise $Z_iZ_j$ interactions between ancilla qubits. To obtain logarithmic scaling in the failure probability $\delta$, we devise an improved two-level adiabatic schedule that exponentially suppresses diabatic leakage. This schedule also naturally improves the adiabatic version of Grover's algorithm \cite{RolandCerf2002}, and we expect it to apply to a wider class of adiabatic applications.

\paragraph{Challenges.}
A core difficulty when implementing QPE adiabatically is retaining the Heisenberg limit. While an adiabatic implementation follows immediately from the universality proof of adiabatic quantum computation~\cite{Aharonov2007}, such mappings are far from optimal due to unnatural history-state constructions. More broadly, adiabatic critical metrology is known to fall short of Heisenberg scaling in certain settings~\cite{gietka2021adiabatic}. 

Standard adiabatic guarantees yield suboptimal scaling in both the accuracy $\eps$ and the failure probability $\delta$.
Consider a time-varying two-level Hamiltonian $h(t)$ with instantaneous ground and excited state $\ket{g(t)}$ and $\ket{e(t)}$, and associated energies $E_g(t)$ and $E_e(t)$. 
A commonly employed\footnote{While this simple criterion is frequently used and often sets the right scale, it is neither necessary nor sufficient (see e.g. \cite{lidar2009adiabatic,farhi2000quantum,farhi2001quantum,RolandCerf2002,du2008experimental} and our later discussion).} version of the ``adiabatic approximation theorem'' states that diabatic leakage is suppressed provided the Hamiltonian evolves slowly relative to the square of its spectral gap~\cite{messiah1962quantum}. More precisely, the final state $\ket{\psi(T)}$ after adiabatic evolution is said to satisfy $|\bra{\psi(T)}\ket{g(T)}|^2 \geq 1 - \delta$ provided
\begin{equation}
 \forall t: \  \frac{\left| \bra{e(t)}\partial_t h(t)\ket{g(t)}\right|}{(E_e(t)-E_g(t))^2} \leq \sqrt{\delta}.
\label{eq:adiabatic_simple_intro}
\end{equation}
This condition results in two immediate issues. First, applying  the bound \eqref{eq:adiabatic_simple_intro} to a Hamiltonian resolving energies up to precision $\eps$---which must necessarily have a spectral gap of order $O(\eps)$---yields a suboptimal runtime of $T = \Omega(\frac{1}{\eps^2})$ for fixed evolution speed $\partial_t h(t)$, which is quadratically worse than the Heisenberg limit. Generic superadiabatic guarantees require even longer runtimes scaling with the cube of the inverse gap~\cite{lidar2009adiabatic}. Second, the diabatic leakage (and thus the overall failure probability of the algorithm) decays only polynomially in the total interrogation time $T_\mathrm{tot}$ according to \eqref{eq:adiabatic_simple_intro}: Lowering $\delta$ by a factor of four requires halving the evolution speed and thus doubling the total interrogation time. In contrast, the failure probability of gate-based QPE decays exponentially in the number of gates.

Our algorithm resolves the above two obstacles by (i) using a locally varying schedule in the spirit of~\cite{RolandCerf2002} to recover the Heisenberg-optimal scaling in $\eps$ and (ii) reducing the analysis of adiabatic QPE to a two-level system, for which we 
design an adiabatic schedule based on the theory of superadiabaticity~\cite{Nenciu1993,HagedornJoye2002,JRS2007,lidar2009adiabatic}. The schedule is obtained by  reparametrizing the Roland--Cerf local-gap schedule with an endpoint-flat Denjoy--Carleman clock~\cite{RolandCerf2002,Rainer2022,KrieglMichorRainer2009}. 
The use of ultradifferentiable regularity in adiabatic theory has important
precedents: Nenciu proved exponential adiabatic estimates under Gevrey-type
regularity, and Jung and Elgart--Hagedorn developed Gevrey switching theorems
for compactly supported smooth interpolations
\cite{Nenciu1993,Jung2000,ElgartHagedorn2012}.  Compared to these Gevrey-class clock approaches that achieve a $\delta$-dependent overhead of $\widetilde O\left(\log^q(\delta^{-1})\right)$ for $q>1 $, the Denjoy--Carleman clock we use achieves the near-optimal scaling $\widetilde O\left(\log(\delta^{-1})\right)$.

\section{Adiabatic Quantum Phase Estimation}
Let $H_S$ be the Hamiltonian of interest acting on an $n$-qubit system Hilbert space $\mathcal{H}_S$ and assume, after shifting and rescaling, that $\operatorname{spec}(H_S)\subset[0,1)$. The runtimes throughout are measured in physical time defined by this normalization. Fix a target additive accuracy $\eps>0$ and set
\begin{equation}
 m=\left\lceil\log_2\frac1\eps\right\rceil,
 \qquad
 \nu=2^{-m-1}.
\label{eq:m_nu_def}
\end{equation}
Let $\mathcal A$ be an $m$-qubit \textbf{a}ncilla register with qubits $a_1,\dots,a_m$, initialized in $\ket{+}^{\otimes m}$ with $\ket{+}=(\ket{0}+\ket{1})/\sqrt{2}$.
Define the counting operators
\begin{equation}
B_{j-1}\equiv \sum_{\ell=1}^{j-1}2^{-\ell} n_\ell,
\qquad n_\ell = \frac{\id - Z_{a_\ell}}{2},
\label{eq:prefix}
\end{equation} and the \emph{midpoint operator} at stage $j$ as
\begin{equation}
\mu_j \equiv B_{j-1}+2^{-j}
\label{eq:midpoint}
\end{equation}
acting on the ancilla register. Any computational basis string $b\in\{0,1\}^m$ is an eigenstate of $B_m$ with eigenvalue
$\val(b)=\sum_{\ell=1}^m 2^{-\ell}b_\ell$.
We use the center-valued estimate
\begin{equation}
\widehat\lambda(b)=\val(b)+2^{-m-1}
\label{eq:center_estimate}
\end{equation}
to associate a number to a computational basis state.
%The theorem below proves a measurement-success guarantee: after measuring the register, the estimate is $\eps$-accurate with probability at least $1-\delta$. The coherent map \eqref{eq:QPEact} can be obtained, up to constant-factor overhead, by copying the estimate to a clean output register and reversing the adiabatic evolution to cancel eigenvalue-dependent dynamical phases.

%The theorem below first proves a measurement-success guarantee: after measuring the register, the estimate is $\eps$-accurate with probability at least $1-\delta$. A genuine single-string coherent map of the form \eqref{eq:QPEact} requires the usual guard-band promise that no eigenvalue lies too close to an $m$-bit bin boundary; under this promise, the copy-and-reverse construction after Theorem~\ref{thm:runtime_correctness} implements the coherent map with the same asymptotic interrogation time.

\paragraph{Algorithm.}
Our adiabatic algorithm consists of $m$ sequential stages, each resolving one bit of the eigenvalues of $H_S$.  
At stage $j$ we evolve under the Hamiltonian
\begin{align}
H^{(j)}(s)
&=
-(1-s)\,X_{a_j}+s\left(H_S-\mu_j\right)Z_{a_j}.
\label{eq:compH}
\end{align}
The protocol is:
\begin{enumerate}
\item Initialize the system in an arbitrary state $\ket{\psi}_S=\sum_i c_i\ket{i}$ and the ancilla in $\ket{+}^{\otimes m}_{\mathcal A}$.
\item For $j=1,2,\dots,m$: evolve $s:0\to 1$ under $H^{(j)}(s)$ using the schedule $s(t)$ specified in Eq.~\eqref{eq:single_velocity}.
\item (Optional) Measure $\mathcal A$ in the computational basis to obtain a bit string $b$ and output the center estimate $\widehat\lambda(b)$ from Eq.~\eqref{eq:center_estimate}.
\end{enumerate}
Before proving correctness and efficiency of this protocol, we highlight several convenient properties. 
First, note that for all $j$ and $s$, $[H^{(j)}(s),H_S\otimes\id_\mathcal{A}]=0$, so the adiabatic evolution is block-diagonal in the eigenbasis of $H_S$. In particular, degenerate eigensectors of $H_S$ are preserved. Moreover, for $ \ell < j$, the ancilla qubit $a_\ell$ appears in $H^{(j)}(s)$ only through $Z_{a_\ell}$ terms, hence if it is set to a bit $\in \{0,1\}$ it remains conserved for all $j > \ell$. 
Thus the entire adiabatic evolution is block-diagonal across superpositions of energy eigenstates and across already-decided prefixes on the ancilla qubits $a_1,\dots, a_m$, and is thus described by an effective two-level Hamiltonian. 

\paragraph{Two-level reduction.}
Fix a system eigenstate $\ket{\alpha}$ with eigenvalue $\lambda\equiv\lambda_\alpha$, and fix a computational-basis prefix $x\in\{0,1\}^{j-1}$. On the joint subspace $\ket{\alpha}_S\otimes\ket{x}_{a_{<j}}\otimes\mathbb C^2_{a_j}$, the midpoint $\mu_j$ becomes a scalar $\mu_j(x)$, and $H^{(j)}(s)$ acts nontrivially only on $a_j$ as
\begin{equation}
h_\Delta(s)=-(1-s)X+s\,\Delta\,Z,
\qquad \Delta \equiv \lambda-\mu_j(x).
\label{eq:twolevel}
\end{equation}
At $s=0$ the unique ground state is $\ket{+}$. At $s=1$, the ground state is $\ket{1}$ if $\Delta>0$ and $\ket{0}$ if $\Delta<0$. Thus adiabatic evolution of the ancilla from $s=0$ to $s=1$ implements a threshold test: the sign of $\Delta$ determines whether the ground-state branch ends in $\ket{1}$ or $\ket{0}$. Repeating this threshold test for all $m$ stages coherently entangles the eigenvector $\ket{\alpha}$ with an $\eps$-approximation to the eigenvalue $\lambda_\alpha$. The above two-level reduction greatly simplifies the analysis of our algorithm and allows us to devise an explicit schedule that near-exponentially suppresses diabatic leakage.

%total system-ancilla Hamiltonian [refs], which is primary source of noise in a system weakly coupled to environment [refs]. Additionally, our protocol stores eigenvalues in the population of quantum states and not in fragile complex phases, making it robust to certain natural errors that drive us out of the groundstate: Z-basis dephasing of already-written output bits is much less harmful than in Fourier-based QPE. Running the schedule slightly longer does not destroy the information. 

%This adiabatic QPE is 
%This approach is useful if $H_S$ is available as a physical analog Hamiltonian in the lab and one can implement $Z \otimes H_S$. This allows us to skip the controlled digital simulation and compilation. Compare to von-Neumann: more ancillas, and if adiabatic you have to square. Wheeler-de-Witt (no!)

%Robust against dephasing in energy eigenbasis of the whole system. 

%Non-trivial approach but set up Lagrange equations for optimal schedule. 

%It avoids the digital QFT. 
For small $\Delta$, there seems to be an apparent gap closing in Eq. \eqref{eq:twolevel}. This is avoided by the fact that our algorithm only has to resolve eigenvalues up to accuracy $\eps \approx 2\nu$. 
More specifically, our analysis distinguishes between
(i) \emph{resolved branches} $|\Delta|>\nu$, where the sign of $\Delta$ must be determined reliably, and
(ii) \emph{unresolved branches} $|\Delta|\le\nu$, where the eigenvalue is effectively indistinguishable from the midpoint at this resolution.  We only need to bound diabatic leakage in the resolved case: in the unresolved regime, both bit-choices remain valid $\eps$-estimates of the true eigenvalue. The following two sections describe our schedule $s(t)$ and make the above error analysis in the resolved and unresolved case concrete. 
\section{Adiabatic Schedule}
To guarantee adiabatic evolution, we require a schedule $s(t)$ that evolves slowly relative to the spectral gap of the two-level Hamiltonian \eqref{eq:twolevel}. This gap is
\begin{equation}
E_e(s)-E_g(s)=2r_\Delta(s),
\label{eq:half_gap}
\end{equation}where $r_x(s)=\sqrt{(1-s)^2+s^2x^2}$ for a general placeholder variable $x$. To get around the suboptimal scaling in $\eps$ implied by \eqref{eq:adiabatic_simple_intro}, we employ a locally varying adiabatic schedule. The simplest candidate is the Roland-Cerf schedule, originally described in~\cite{RolandCerf2002} to recover Grover's quadratic speedup adiabatically. 

\paragraph{Roland-Cerf schedule.} 
The Roland-Cerf schedule is given by (visualized in Fig.~\ref{fig:schedule_trio})
\begin{equation}
\dot s(t) = 4\eta\,r_{\nu}(s(t))^2 =4 \eta \left((1-s)^2+\nu^2 s^2\right),
\label{eq:schedule}
\end{equation}
chosen such that the velocity approximately matches the square of the spectral gap of $h_\Delta$. %Here $\nu = 2^{-m-1}$ is a lower bound on the precision to which eigenvalues need to be resolved. 
The free parameter $\eta$ is chosen later to guarantee adiabaticity and will only depend on the overall failure probability $\delta$.
The duration of a single stage is  
\begin{equation}
T_\mathrm{stage}= \int_0^1 \frac{d s}{4\eta r_{\nu}(s)^2}=\frac{\pi}{8\eta \nu} = O(\eps^{-1}).
\label{eq:Tj}
\end{equation}
This yields an overall runtime  of $T_\mathrm{tot} = m T_\mathrm{stage}= \Theta( \eps^{-1} \log \eps^{-1})$, which is already Heisenberg-optimal in $\eps$ up to a logarithmic factor. 
However, the Roland-Cerf schedule suppresses diabatic leakage only polynomially (cf. Figure \ref{fig:leakage}): it is easy to verify that the total runtime scales with the failure probability $\delta$ as $T_\mathrm{tot} = \Theta(\delta^{-1/2})$, exponentially worse than the optimal bound of $T_\mathrm{tot} = \Theta(\log \delta^{-1})$~\cite{mande2023tight} achieved by gate-based QPE. This is because the speed \eqref{eq:schedule} and its derivatives are large at the endpoints of the schedule (cf. Fig.~\ref{fig:schedule_trio}), causing amplitude leakages into the excited state of order $O(1/T)$. 

\paragraph{Superadiabatic Roland-Cerf schedule.}

To obtain exponentially stronger bounds on the diabatic leakage, we instead reparametrize the Roland-Cerf schedule~\cite{RolandCerf2002} with a non-quasianalytic Denjoy--Carleman function class clock~\cite{Rainer2022,KrieglMichorRainer2009}. Define the Roland--Cerf action coordinate
\begin{equation}
\Theta_\nu(s)
\equiv
\int_0^s \frac{dz}{r_\nu(z)^2},
\qquad
r_\nu(s)=\sqrt{(1-s)^2+\nu^2s^2}.
\label{eq:Theta_def}
\end{equation}
A direct evaluation gives
\begin{equation}
\Theta_\nu(s)
=
\frac1\nu
\left[
\arctan\!\left(\frac{(1+\nu^2)s-1}{\nu}\right)
+
\arctan\!\left(\frac1\nu\right)
\right].
\label{eq:Theta_closed}
\end{equation} %where 
%\begin{equation}
%    \Theta_\nu(1)=\frac{\pi}{2\nu}.
%\end{equation}
\begin{figure}[t]
    \centering
    \includegraphics[width=\columnwidth]{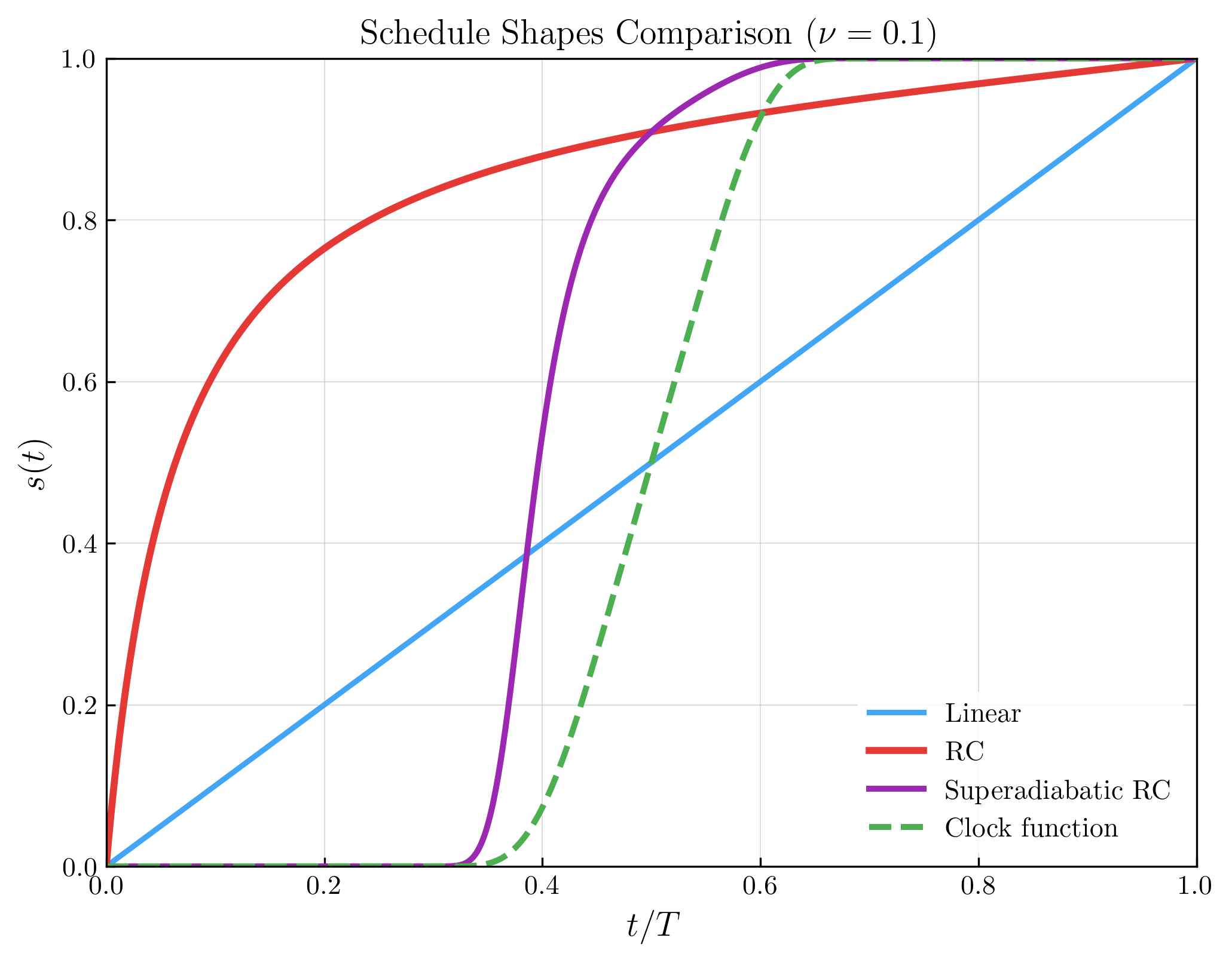}
    \caption{A visualization of three different schedule profiles:
    (a) The linear ramp profile $s(t) = t/T$. 
    (b) The Roland-Cerf schedule \eqref{eq:schedule} mirroring the behavior of the spectral gap.
    (c)  The Roland-Cerf schedule reparametrized by a DC clock, which we call superadiabatic RC. Also shown is the behavior of the DC clock function, whose endpoint derivatives vanish and which thereby near-exponentially suppresses diabatic transitions. 
    }
    \label{fig:schedule_trio}
\end{figure}
The ordinary Roland--Cerf schedule is constant-speed motion in the coordinate $\Theta_\nu(s)$.  Let $q>1$ be fixed and let $F_q:[0,1]\to[0,1]$ be monotone, with $F_q(u)=0$ near $0$, $F_q(u)=1$ near $1$, and
\begin{equation}
 \|F_q^{(n)}\|_\infty
 \le C_F R_F^n n![\log(n+e)]^{qn}
\qquad (n\ge 1).
\label{eq:DC_switch_main}
\end{equation}
The existence of such Denjoy--Carleman clock functions is proved constructively in Lemma~\ref{lem:dc_switch}, and visualized in Fig. \ref{fig:schedule_trio}. For a dimensionless adiabatic parameter $A = \eta^{-1}>0$, define the duration of one stage by
\begin{equation}
T_{\rm stage}(A)
=
\frac{A}{4}\Theta_\nu(1)
=
\frac{\pi A}{8\nu},
\label{eq:single_stage_time}
\end{equation}
and define the schedule implicitly by
\begin{equation}
\Theta_\nu(s(t))
=
\Theta_\nu(1)F_q\!\left(\frac{t}{T_{\rm stage}(A)}\right)
\label{eq:single_schedule}
\end{equation}
for $
0\le t\le T_{\rm stage}(A).$
Differentiating \eqref{eq:single_schedule} gives
\begin{equation}
\dot s(t)
=
\frac{4}{A}\,r_\nu(s(t))^2\,
F_q'\!\left(\frac{t}{T_{\rm stage}(A)}\right).
\label{eq:single_velocity}
\end{equation}
Thus the schedule retains the Roland--Cerf local-gap factor $r_\nu(s)^2$, but the clock is constant near both endpoints.  Equivalently, all positive time derivatives of the Hamiltonian vanish at the beginning and end of each stage, which removes the boundary terms that would otherwise appear in the high-order adiabatic expansion. These boundary cancellations allow us to prove near-exponentially small bounds on the diabatic leakage, significantly stronger than generically available bounds~\cite{lidar2009adiabatic}, as we show now. 

\paragraph{Diabatic leakage for superadiabatic RC.} 
Consider the two-level Hamiltonian \eqref{eq:twolevel} for a resolved branch $|\Delta|>\nu$. %As discussed earlier and made precise in the error analysis below, diabatic leakage does not need to be bounded in the unresolved case.  
Let $P_\Delta(s)$ denote the instantaneous ground-state projector of $h_\Delta(s)$.  The following theorem bounds transitions from the initial ground-state sector to the final excited sector.
\begin{theorem}[Near-exponentially small diabatic leakage]
\label{thm:main_leakage}
Fix $q>1$ and let $\mathcal{V}_{\Delta,A}^{\rm stage}$ be the single-stage two-level adiabatic evolution generated by the Hamiltonian \eqref{eq:twolevel} together with the superadiabatic RC schedule \eqref{eq:single_schedule}.  There exist constants $C,c>0$, depending only on $q$, such that for every $\nu\in(0,1/2]$, every resolved detuning $1\ge |\Delta|>\nu$, and every $A\ge 1$,
\begin{equation}
\bigl\|
(I-P_\Delta(1))\mathcal{V}_{\Delta,A}^{\rm stage}P_\Delta(0)
\bigr\|
\le
C\exp\!\left[-c\frac{A}{[\log(e+A)]^q}\right].
\label{eq:main_leakage_bound}
\end{equation}
\end{theorem}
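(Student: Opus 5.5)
We work throughout in the rescaled time $u=t/T_{\rm stage}(A)\in[0,1]$. In this variable the evolution solves $i\partial_u\psi=T_{\rm stage}\,h_\Delta(s(u))\psi$, with $s(u)=\Theta_\nu^{-1}(\Theta_\nu(1)F_q(u))$. The plan is a superadiabatic (Nenciu/Garrido-type) expansion carried to an optimal order $N$, with the Denjoy--Carleman bound \eqref{eq:DC_switch_main} controlling the growth of the expansion coefficients.

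The first step is a change of variables that removes the $\nu$-dependence. Write $\theta=\Theta_\nu(s)$, and rescale $\phi=\nu\theta\in[0,\pi/2]$, which by \eqref{eq:Theta_closed} is an arctan angle. In the $\phi$ coordinate the Hamiltonian divided by $r_\nu(s)$ is a bounded two-level family. Explicitly, $(1-s,\nu s)/r_\nu=(\cos\psi,\sin\psi)$ with $\psi$ affine in $\phi$, and $h_\Delta/r_\nu$ has unit-vector-like entries $(-\cos\psi,(\Delta/\nu)\sin\psi)$. For $|\Delta|>\nu$ its gap is at least $2$, and its $\phi$-derivatives are uniformly bounded and analytic in a strip of width independent of $\nu,\Delta$. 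To keep uniformity in $\Delta/\nu\in(1,1/\nu]$, we normalize further by the gap $2r_\Delta/r_\nu$ and check that the Bloch vector $\hat n(\phi)$ of $h_\Delta$ is analytic in $\phi$ in a fixed complex strip with uniformly bounded Cauchy estimates. This holds because the zeros of $r_\Delta$ in complex $s$ stay a fixed distance (in $\phi$) from the real segment when $|\Delta|\ge\nu$.

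Next we count the effective phase. The dynamical phase accumulated is $T_{\rm stage}\int 2r_\Delta\,ds/du\,du=\tfrac{A}{4}\int 2(r_\Delta/r_\nu^2)\,\Theta_\nu(1)F_q'\,du\cdot r_\nu\ldots$. Equivalently, in the clock variable $\tau=\Theta_\nu(1)F_q(u)$ the generator is $\tfrac{A}{4}h_\Delta/r_\nu^2\,d\tau$, whose gap is $\gtrsim A\,r_\Delta/r_\nu^2\ge A\nu/r_\nu\cdot(\ldots)$. The task is to show that in the $\phi$-variable the effective adiabatic parameter is $\gtrsim A$, uniformly.

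The second step builds the superadiabatic projectors $P_N(u)=P_\Delta+\sum_{k=1}^N A^{-k}P^{(k)}(u)$ by the standard recursion. Each $P^{(k)}$ is a polynomial in $u$-derivatives of $\hat n(\phi(u))$, composed with the clock $F_q$, and it vanishes wherever $F_q'\equiv0$. Hence $P_N=P_\Delta$ exactly near $u=0,1$, so there are no boundary terms. The leakage is then bounded by the standard estimate $\|(I-P_\Delta(1))\mathcal V P_\Delta(0)\|\le A^{-N}\int_0^1\|\mathcal R_N(u)\|\,du$, where $\mathcal R_N$ is the commutator defect, which involves $\partial_u P^{(N)}$.

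The main obstacle is the combinatorial bound on $\|P^{(k)}\|$. We would use Faà di Bruno together with the analytic bounds $\|\partial_\phi^n \hat n\|\le C n!\rho^{-n}$ and the DC bounds on $F_q$ (the DC class is closed under composition with analytic maps, via the log-convexity of $M_n=n![\log(n+e)]^{qn}$). Running the Nenciu induction on the recursion should then give $\|P^{(k)}\|\le C R^k k!\,[\log(k+e)]^{qk}$. Plugging this in yields the leakage bound $C (RN[\log N]^q/A)^N$. Choosing $N\approx A/(eR[\log(e+A)]^q)$ gives $\exp[-cA/\log(e+A)^q]$, which is \eqref{eq:main_leakage_bound}.

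The step I expect to be delicate is the induction. The recursion involves products and derivatives, and we must show that the $\log$ factors do not compound beyond $[\log k]^{qk}$. I would first try majorant series in the style of Nenciu/Jung: encode the sequence $M_n$ in a dominating weight and use the submultiplicativity $\binom{n}{j}M_jM_{n-j}\le M_n$. I would also verify the uniformity in $\nu$ and $\Delta$ at the end.
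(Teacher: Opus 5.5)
\paragraph{Verdict.} Your architecture matches the paper's: superadiabatic projectors that collapse to $P_\Delta$ on the clock plateaus, the residual identity, Denjoy--Carleman combinatorics via $\binom nj M_jM_{n-j}\le M_n$, and truncation at $N\sim A/[\log(e+A)]^q$. The gap is in the step meant to give uniformity in $\nu$ and $\Delta$, and that step is false.

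\paragraph{The uniform-analyticity claim fails.} In the Roland--Cerf angle $\phi=\nu\Theta_\nu(s)$ one has $\tan\phi=\nu s/(1-s)$ and $h_\Delta/r_\nu=-\cos\phi\,X+\kappa\sin\phi\,Z$ with $\kappa=\Delta/\nu$. So the Bloch angle $\chi$ of $\hat n$ satisfies $\tan\chi=\kappa\tan\phi$, and $d\chi/d\phi=\kappa/(\cos^2\phi+\kappa^2\sin^2\phi)$, which equals $|\Delta|/\nu$ at $\phi=0$. The branch points of $\hat n$ are the zeros of $\cos^2\phi+\kappa^2\sin^2\phi$, namely $\phi=\pm i\operatorname{artanh}(\nu/|\Delta|)\approx\pm i\nu/|\Delta|$. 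These are exactly the images of the complex zeros $s=1/(1\pm i\Delta)$ of $r_\Delta$, so your justification runs the wrong way: they approach the real segment as $|\Delta|/\nu\to\infty$. Concretely, for $|\Delta|\sim1$ the clock, tuned to gap $\nu$, crosses $s\in[0,1/2]$ inside a $\phi$-window of width $O(\nu)$, while $\hat n$ turns by about $\pi/4$. Your Cauchy/Fa\`a di Bruno estimate therefore carries $\rho^{-n}\sim(|\Delta|/\nu)^n$. An induction with expansion parameter $A^{-1}$ and an $O(1)$ gap cannot then give constants depending only on $q$.

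\paragraph{What actually compensates.} The $u$-generator $\frac{\pi A}{8\nu}h_\Delta$ has local gap $\frac{\pi A}{4\nu}r_\Delta\ge\frac{\pi A|\Delta|}{4\sqrt2\,\nu}$, which always dominates the turning rate $d\chi/d\phi\le|\Delta|/\nu$. A correct recursion divides by this position-dependent gap, so every $u$-derivative comes paired with $8\nu/(\pi r_\Delta)$. The $P^{(k)}$ are thus built from derivatives in gap time $z$, with $dz/du=\pi r_\Delta/(8\nu)$, not from $u$-derivatives of $\hat n$. Your ``effective phase'' paragraph is where this must be settled, and it is left as a task.

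\paragraph{The two missing ingredients.}
\begin{enumerate}
\item \emph{Uniform sup bounds.} The paper works in $z$ and uses the Bloch angle of $h_\Delta$ itself, $\cos\phi=(1-s)/r_d$, $\sin\phi=ds/r_d$. In this coordinate $K_\Delta=h_\Delta/r_d$ is entire with gap $2$, and $d/dz=a_d(\phi)\partial_x+b_{d,\nu}(\phi)F_q'(x)\partial_\phi$ has trigonometric coefficients bounded uniformly for $\nu\le d\le1$. An induction on $W_n=\mathcal L^nG_\sigma$, with $(x,\phi)$ treated as independent variables, then gives uniform sup bounds on $K_\Delta^{(n)}$.
\item \emph{$L^1$ control of the residual.} The $z$-interval has length up to $O(1/\nu)$; equivalently, $\partial_u=\frac{\pi r_d}{8\nu}\partial_z$. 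So sup bounds on the $P^{(k)}$ do not give a $\nu$-independent bound on $A^{-N}\int_0^1\|\mathcal R_N\|\,du$, and your ``plugging in'' step would lose a factor of order $\nu^{-1}$. The paper instead proves $L^1$ bounds using that $\phi(x)$ is monotone with total variation $\pi/2$. It carries these through the recursion with a weighted $L^\infty/L^1$ product calculus in its weighted superadiabatic theorem.
\end{enumerate}
Without these two ingredients, the uniformity in $\nu$ and $\Delta$ asserted in the theorem is not established.
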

The proof is given in Appendix~\ref{app:proof-thm1}.  The main technical input is the weighted superadiabatic estimate, Theorem~\ref{thm:weighted_global}, applied to the normalized two-level Hamiltonian in the Roland--Cerf action coordinate.
\begin{figure}[h!]
    \centering
    \includegraphics[width=\columnwidth]{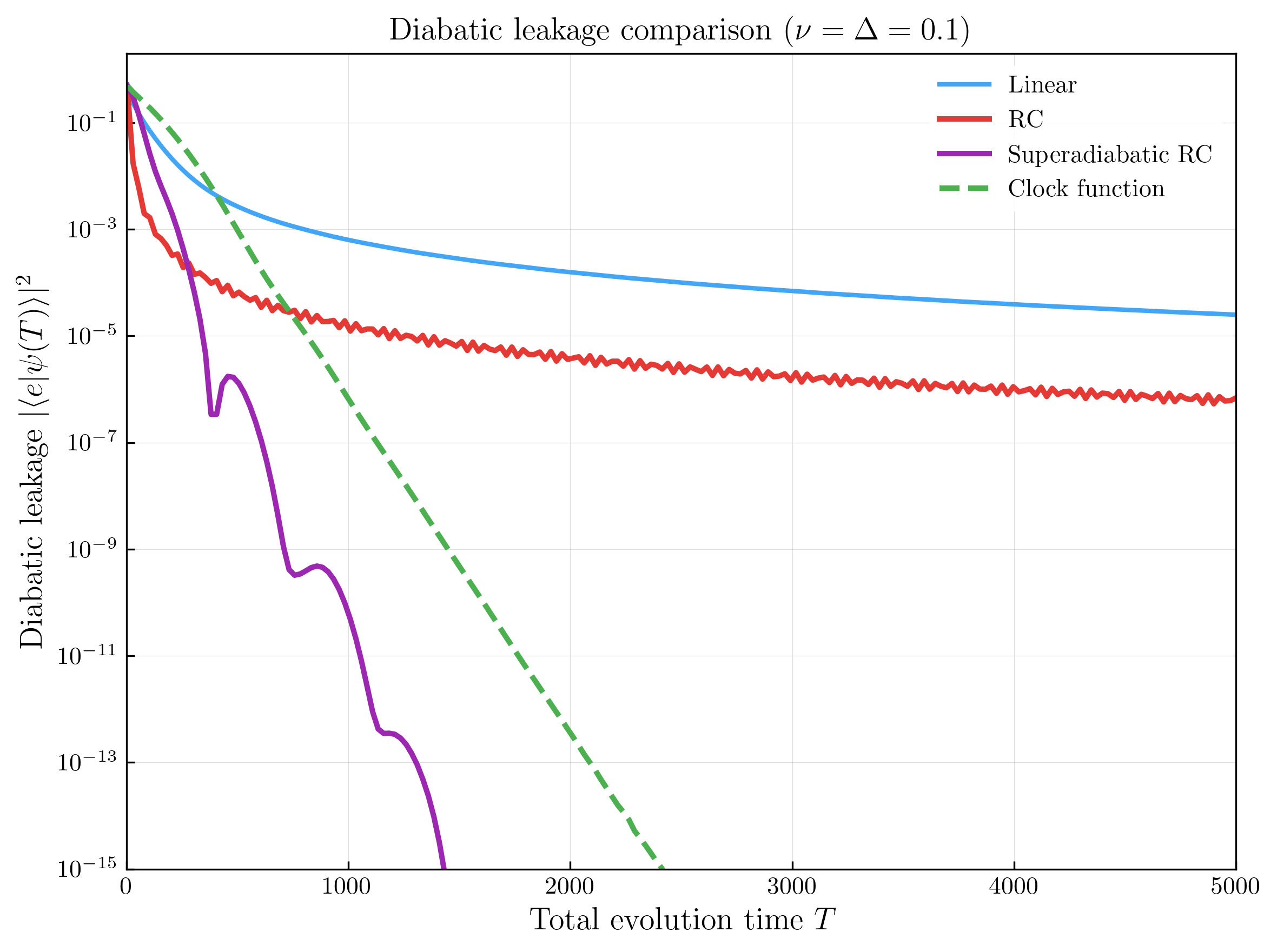}
    \caption{Numerical computation of the diabatic leakage via exact diagonalization of the two-level Hamiltonian \eqref{eq:twolevel}. Four different schedules are considered: linear, Roland-Cerf,  superadiabatic Roland-Cerf, and the DC clock function, which can be interpreted as the linear schedule reparametrized by the DC clock. While the linear and Roland-Cerf schedules yield a power-law decay, the reparametrized schedules suppress diabatic transitions near-exponentially. Both RC-type schedules display Stückelberg oscillations arising from coherent interference between nonadiabatic transitions.
    }
    \label{fig:leakage}
\end{figure}

We note that the same proof gives a high-confidence version of local adiabatic Grover search \cite{RolandCerf2002}, as we show in Appendix \ref{app:grover}.
Indeed, after symmetry reduction the Grover Hamiltonian is an affine two-level
avoided crossing with minimum gap $\sqrt{M/N}$, and the same superadiabatic Roland-Cerf schedule \eqref{eq:single_schedule} applies.  %Replacing the linear RC clock by the
%same endpoint-flat Denjoy--Carleman clock therefore also gives single-run failure
%probability
%\[
%    p_{\rm fail}
%    \le
%    C\exp[-cA/\log^q(e+A)]
%\]
%in time $T=A\,\Theta(\sqrt{N/M})$, for known $M$.  This improves the %confidence
%dependence of one coherent adiabatic Grover run, although ordinary measured
%Grover can also obtain logarithmic confidence overhead by repetition and
%verification.
\section{Error Analysis}
We now show that the protocol specified by the Hamiltonians \eqref{eq:twolevel} and the superadiabatic RC schedule \eqref{eq:single_schedule} evolves adiabatically and accurately implements QPE.  %The na\"ive local adiabatic criterion \eqref{eq:adiabatic_simple_intro} is not sufficient for the desired high-confidence scaling, so we exploit the two-level reduction and prove a superadiabatic leakage estimate directly.
To formalize this claim, let \[
H_S=\sum_\alpha \lambda_\alpha \Pi_\alpha
\]
be the spectral decomposition of \(H_S\), with \(\lambda_\alpha \in [0,1)\). Because all stage Hamiltonians \eqref{eq:compH} commute with $H_S$, degenerate eigenspaces of $H_S$ are preserved.
For each eigenvalue \(\lambda_\alpha\), let
\[
G^{\lambda_\alpha}
=
\sum_{\substack{b\in\{0,1\}^m\\
|\widehat\lambda(b)-\lambda_\alpha|\le\eps}}
|b\rangle\langle b|_\mathcal{A} 
\] be the projector onto valid $m$-bit estimates of $\lambda_\alpha$. Then the probability of failure for our algorithm is specified by the block diagonal spectral projector \begin{equation}
    \Pi_\mathrm{fail} := 
I_{\mathcal{S} \otimes \mathcal{A}}-
\sum_\alpha \Pi_\alpha\otimes G^{\lambda_\alpha}
\label{eq:fail}
\end{equation}
applied to the output state of our protocol \[
\mathcal{V}_A\left(|\psi\rangle_\mathcal{S}|+\rangle_\mathcal{A}^{\otimes m}\right).
\] Here, $\mathcal{V}_A$ is the full adiabatic QPE algorithm, that is, sequential adiabatic evolution under Hamiltonians \eqref{eq:compH} and the single schedule \eqref{eq:single_schedule} (with parameter $A$) for all stages $j=1,2,\dots,m$.

\begin{theorem}[Correctness and efficiency]
\label{thm:correctness}
Fix \(q>1\), target accuracy
\(\eps\in(0,1/2]\), and error probability \(\delta\in(0,1)\). Let $\mathcal{V}_A$ be the unitary adiabatic evolution under Hamiltonians \eqref{eq:compH} and the schedule \eqref{eq:single_schedule} with parameter $A$. Then there exists a choice of $A$ given by
% (specified in Eq. \eqref{eq:A_inversion_app}) such that: 
\begin{equation}
A=C_qL[\log(e+L)]^q,
\qquad
L=1+\log\frac{C_0m}{\sqrt\delta},
\label{eq:A_choice_main}
\end{equation}
with constants $C_0,C_q$ sufficiently large depending only on $q$ such that:
\begin{itemize}
    \item For every $n$-qubit input state \(|\psi\rangle_\mathcal{S}\),
\[
\left\|
\Pi_\mathrm{fail}
\mathcal V_A
\left(
|\psi\rangle_\mathcal{S}|+\rangle_\mathcal{A}^{\otimes m}
\right)
\right\|
\le
\sqrt\delta .
\]
    \item The total interrogation time of $\mathcal{V}_A$ is
\begin{align*}
T_{\rm tot}
&=
O\!\left(
\frac{\log(1/\eps)}{\eps}
\left[
1+\log\frac1\delta+\log\log\frac e\eps
\right]\right. \\[-0.25em]
&\hspace{1.2cm}\left.\times
\left[
\log\!\left(
e+\log\frac1\delta+\log\log\frac e\eps
\right)
\right]^q
\right) \\ &
=
\widetilde O\!\left(\frac1\eps\log\frac1\delta\right).
\label{eq:runtime_main}
\end{align*}
\end{itemize}
\end{theorem}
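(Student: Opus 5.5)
We reduce everything to the two-level picture and then combine per-stage leakage bounds over the $m$ stages. Since every stage Hamiltonian \eqref{eq:compH} commutes with $H_S\otimes\id_\mathcal{A}$ and with $Z_{a_\ell}$ for $\ell<j$, we expand $|\psi\rangle_\mathcal{S}=\sum_\alpha \Pi_\alpha|\psi\rangle$. Because $\Pi_\mathrm{fail}$ is block diagonal in $\alpha$, we get
\[
\|\Pi_\mathrm{fail}\mathcal V_A(|\psi\rangle|+\rangle^{\otimes m})\|^2=\sum_\alpha \|\Pi_\alpha|\psi\rangle\|^2\, p_{\rm fail}(\lambda_\alpha).
\]
It therefore suffices to prove $p_{\rm fail}(\lambda)\le\delta$ for a single eigenvalue $\lambda\in[0,1)$, with the system register acting as a spectator. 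For fixed $\lambda$ the ancilla dynamics is a sequence of stages. At stage $j$, on each prefix branch $x\in\{0,1\}^{j-1}$ the unitary is $\mathcal V^{\rm stage}_{\Delta_j(x),A}$ acting on $a_j$, with $\Delta_j(x)=\lambda-\mu_j(x)$. Qubits $a_{>j}$ remain in $|+\rangle$, which is the $s=0$ ground state of their later stage.

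Next we define the set of ``good'' prefixes and track the amplitude outside it. Call a prefix $x\in\{0,1\}^{j}$ \emph{valid} if the interval of values consistent with it, $[\val(x),\val(x)+2^{-j}]$, contains $\lambda$ up to slack $\nu$ accumulated only through unresolved stages. The key combinatorial claim is this. If at every resolved stage ($|\Delta_j(x)|>\nu$) the bit is set to the ground-state sign, and at unresolved stages ($|\Delta_j(x)|\le\nu$) either bit is chosen, then the final string satisfies $|\widehat\lambda(b)-\lambda|\le\eps$.

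We prove this by induction on the invariant
\[
\lambda\in[\val(x)-\nu,\ \val(x)+2^{-j}+\nu].
\]
A resolved stage halves the interval correctly. An unresolved stage means $\lambda$ is within $\nu$ of the midpoint, so either half is within $\nu$. Only one unresolved event can actually add slack, since after it $\lambda$ sits within $\nu$ of an endpoint and subsequent midpoints are at distance $\ge 2^{-j-1}\ge 2\nu$ until the last stage. Even crudely, the final interval has width $2^{-m}=2\nu$ with slack $\nu$, so the center estimate is within $2\nu\le\eps$. The point to check here is that $2^{-m-1}\cdot 2\le\eps$ using $m=\lceil\log_2(1/\eps)\rceil$.

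Write the evolution as $U_m\cdots U_1$ and let $Q_j$ project onto ancilla configurations whose first $j$ bits are valid, with remaining qubits in $|+\rangle$. At a resolved stage the ground state at $s=0$ is exactly the current $|+\rangle$, so Theorem~\ref{thm:main_leakage} bounds the leakage:
\[
\|(I-Q_j)U_jQ_{j-1}\|\le \varepsilon_{\rm st}\equiv C\exp[-cA/\log^q(e+A)].
\]
Unresolved stages leak nothing out of the valid set, since both outcomes are valid. A hybrid/telescoping argument then gives
\[
\|(I-Q_m)U_m\cdots U_1 Q_0\|\le m\,\varepsilon_{\rm st}.
\]
We must also check that amplitude leaving the ground branch cannot return to validity in a way that breaks the bound. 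The telescoping handles this by bounding the norm of the bad component directly: $(I-Q_m)\prod U = \sum_j (I-Q_m)\cdots U_{j+1}(I-Q_j)U_jQ_{j-1}\cdots$, plus the all-good term, which is killed by $(I-Q_m)$. Hence $\sqrt{p_{\rm fail}}\le m\,\varepsilon_{\rm st}$.

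It remains to choose $A$ so that $m\,C\exp[-cA/\log^q(e+A)]\le\sqrt\delta$, i.e.\ $cA/\log^q(e+A)\ge L$ with $L=1+\log(C_0m/\sqrt\delta)$ and $C_0\ge C$. Take $A=C_qL[\log(e+L)]^q$. Then $\log(e+A)\le K_q\log(e+L)$ for a constant $K_q$, so
\[
\frac{A}{\log^q(e+A)}\ge \frac{C_q}{K_q^q}\,L\ge L/c
\]
once $C_q$ is large. This inversion is the fiddly step, and we plan to handle it with the elementary bound $\log(e+C_qLx)\le\log C_q+\log(e+L)+\log(e+x)$. The runtime is $T_{\rm tot}=m\,\pi A/(8\nu)=O(\eps^{-1}\log(1/\eps)\,A)$, and $L=O(1+\log(1/\delta)+\log\log(e/\eps))$, which gives the stated bound.

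We expect the main obstacle to be the leakage bookkeeping across stages. It needs the valid-prefix invariant to hold exactly under unresolved splits, and it needs the telescoping to use only the one-sided bound of Theorem~\ref{thm:main_leakage}, which controls $(I-P(1))\mathcal V P(0)$, rather than full adiabatic-following fidelity.
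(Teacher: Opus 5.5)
Your proposal is correct and follows essentially the same route as the paper: the good-prefix invariant $\dist(\lambda,I_b)\le\nu$ (both children good on unresolved branches), a per-stage bound $\|(I-Q_j)U_jQ_{j-1}\|\le\varepsilon_{\rm st}$ from Theorem~\ref{thm:main_leakage} applied blockwise over orthogonal prefixes, the center-estimate bound $\nu+2^{-m-1}=2^{-m}\le\eps$, and the same inversion $A=C_qL[\log(e+L)]^q$ with $T_{\rm tot}=m\pi A/(8\nu)$. Your hybrid/telescoping expansion is an unrolled form of the paper's recursion $e_j^\lambda\le e_{j-1}^\lambda+\sqrt\delta/m$, so the two arguments give the same bound.
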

%Degeneracies of the Hamiltonian are naturally handled by the fact that $[H^{(j)}(s),H_S\otimes\id_\mathcal{A}]=0$.
The proof is given in Appendix~\ref{app:proof-thm2}.  It combines Theorem~\ref{thm:main_leakage} with a good-prefix invariant over the \(m=\lceil\log_2(1/\eps)\rceil\) midpoint comparisons and then sums the stage times.

\section{Discussion}
This section compares adiabatic QPE to alternative implementations and discusses its practicality, robustness, and coherence. In general, adiabatic QPE is naturally applicable if the Hamiltonian $H_S$ is available as a physical Hamiltonian in the lab via the interaction $ H_S \otimes Z$, in which case adiabatic QPE avoids expensive compiling of controlled powers of \(e^{-iH_St}\) through product formulas, block encodings, or
qubitization~\cite{Lloyd1996,BerryChildsKothari2015,LowChuang2019}. 

\paragraph{Robustness.}

Adiabatic QPE enjoys useful robustness properties.
First, it is well-known that adiabatic quantum computation is naturally resistant to dephasing errors in the energy eigenbasis~\cite{ChildsFarhiPreskill2002,
SarandyLidar2005,AlbashBoixoLidarZanardi2012,AlbashLidar2015,AlbashLidar2018}, which is an important source of noise in systems weakly coupled to the environment~\cite{paz1999quantum}. Second, unlike gate-based QPE which writes eigenvalues into relative
Fourier phases before the final QFT, our protocol stores  eigenvalue estimates
in quantum state populations
of the ancilla register. Since later stages interact with the ancilla register only through \(Z\)-diagonal terms, \(Z\)-basis dephasing of already written eigenvalue bits does not
affect the final measurement statistics and commutes with all subsequent
stages. Naturally, this observation does not guarantee robustness to other errors, such as relaxation, bit flips, dephasing of the active qubit in the wrong basis,
and bath-induced transitions between instantaneous eigenstates.

\paragraph{Hardware assumptions.} From a hardware perspective, the Hamiltonians \eqref{eq:compH} require only (i) a transverse field on the register, (ii) pairwise $Z_iZ_j$ interactions between register qubits, and (iii) a coupling $H_SZ_{a_j}$ between the system energy and a single ancilla qubit. This distinguishes AQPE from a non-adiabatic but analog
von-Neumann pointer measurement of energy, which would couple \(H_S\) to a
many-level pointer or clock degree of freedom through an interaction such as
\(H_S\otimes P\)~\cite{vonNeumann1955}.  Our protocol is closer in spirit to iterative or
semiclassical forms of phase estimation, which can also use a single control
qubit at a time~\cite{Kitaev1995,GriffithsNiu1996,lin2025analog}. The difference is that adiabatic QPE runs without intermediate measurements or classical feed-forward, and can be implemented coherently.

\paragraph{Previous adiabatic approaches.} To the best of our knowledge, no prior adiabatic implementation of QPE achieves the near-optimal scaling $T = \widetilde{O} \left(\frac{1}{\eps} \log\frac{1}{\delta}\right)$. 
Ref.~\cite{torosov2009design} constructs an adiabatic implementation of the
Quantum Fourier Transform (QFT) via circulant Hamiltonians and combines it with gate-based circuits to construct the QPE. This approach is hybrid and the precise runtime is not analyzed. Hen~\cite{hen2014fourier} constructs adiabatic implementations of several two-qubit gates and uses them to assemble an adiabatic
gate-by-gate realization of the QFT and thus QPE, but this approach inherits the circuit structure of gate-based QPE, and moreover requires relative phases to be maintained throughout the
adiabatic gates. 
%. Moreover, the construction involves degenerate ground-state manifolds and requires relative phases to be maintained throughout the adiabatic gates. 
Ref.~\cite{ZhangWeiPapageorgiou2010} gives an adiabatic
phase-estimation-like construction for quantum counting based on Berry phases, achieving $O(\eps^{-3/2})$ scaling, which is suboptimal relative to the $O(\eps^{-1})$
scaling of gate-based quantum counting and highlights the difficulty of recovering optimal scaling adiabatically. %Finally, Refs.~\cite{costa2022optimal,an2022quantum} do not study QPE, but show that
%carefully optimized adiabatic schedules can recover near-optimal or optimal algorithmic scalings
%for quantum linear-system solvers. 

\paragraph{Coherence.}
Our adiabatic evolution is block diagonal in the eigenspaces of \(H_S\), but as written it need not implement the ideal coherent map \eqref{eq:QPEact} with the same phases: different eigenvalue and prefix branches can accumulate different dynamical phases during the adiabatic evolution. (Geometric phases do not play a role because each two-level Hamiltonian \eqref{eq:twolevel} is real.) This does not affect applications of QPE if they rely only on the measurement statistics (such as Shor's algorithm~\cite{Shor1994}), or if they only use QPE to apply classical functions $f(\lambda)$ to the eigenvalues where $f$ does not vary strongly over $\eps-$valid estimates (such as $f(\lambda) = \lambda^{-1}$ for linear system solvers \cite{HHL2009}). In the latter case, simply apply the adiabatic protocol $\mathcal{V}$, apply a diagonal operation $f$ on the eigenvalue register, and subsequently reverse the adiabatic evolution which uncomputes the dynamical phases. 

If desired, a genuinely coherent implementation of QPE can be achieved by applying the adiabatic evolution $\mathcal{V}$, copying the eigenvalues from the ancilla register $\mathcal{A}$ to a separate register $\mathcal{R}$ with $\mathrm{CNOT}$ gates, and subsequently implementing the inverse adiabatic evolution $\mathcal{V}^\dagger$ on only the system $\mathcal{S}$ and the ancilla $\mathcal{A}$. This coherently uncomputes all dynamical phases while leaving eigenvectors entangled with corresponding eigenvalue estimates, \emph{provided} each eigenvector $\ket{\alpha}$ is associated with a single eigenvalue estimate $\widetilde\lambda_\alpha$. This is the case if no eigenvalue lies too close to a dyadic $m$-bit boundary, such that all eigenvalues are resolved. Alternatively, one can overresolve to \(m+r\) bits and copy only an
\(m\)-bit coarse label, thereby
reducing the probability that the copied label distinguishes two
valid boundary branches. 
\paragraph{Imperfect implementations of the schedule.}
The superadiabatic schedule in \eqref{eq:single_schedule} is a mathematically convenient schedule that 
should be viewed as a compact asymptotic way to package a family of finite-order endpoint-flat controls. For any fixed target \(\delta\), the leakage bound proof only uses integration-by-parts/superadiabatic orders up to \(N=O(\log(1/\delta)+\log\log(1/\eps))\). Thus an implementation need not realize an infinitely differentiable clock with infinite precision; it suffices to approximate the switch and its first \(N\) derivatives to the accuracy required by the target error budget. %If the waveform generator is subject to white noise, this does not distort the initial exponential decay of diabatic leakage visible in Fig \ref{fig:leakage}, but the leakage will necessarily hit a noise floor whose height is governed by the noise strength. 

In practice, one might want to replace the superadiabatic schedule with a schedule that is easier to implement with noisy waveform generators available in the lab. Many different simpler schedule choices exist that achieve nearly the same scaling:
One approach is to re-parametrize the Roland-Cerf schedule by a smooth Gevrey-class clock~\cite{Nenciu1993}, which achieves total interrogation time $T_{\rm tot}=
\widetilde O\!\left(\frac1\eps\log^q\frac1\delta\right)$ for any $q > 1$. Indeed, we have found that the simple Gevrey-class clock at $q=2$, while asymptotically worse, performs better than the DC-clock for realistic choices of $\delta$. If the initial state $\ket{\psi}_S$ is already close to an eigenstate, one could also directly use the simple Roland-Cerf schedule and exponentially amplify the success probability by repetition and majority vote. 

\section{Conclusion}
We have presented an adiabatic analogue of Quantum Phase Estimation: instead of implementing controlled powers of $e^{-iH_S t}$ and a quantum Fourier transform, we implement $m$ adiabatic midpoint comparisons that write an additive estimate directly into a register.
Our protocol requires only (i) a transverse field on the register, (ii) pairwise $Z_iZ_j$ interactions between register qubits, and (iii) a coupling $H_SZ_{j}$ between the system energy and a single ancilla qubit at a time. Our algorithm is further robust against certain dephasing errors. %as well as against imperfect control of the adiabatic clock schedule.

By reducing the analysis of adiabatic QPE to a two-level system, in which we reparametrize the Roland-Cerf schedule with a Denjoy-Carleman clock, we prove a total time complexity of
$T = \widetilde O(\eps^{-1}\log(1/\delta))$, which is optimal (up to logarithmic factors) in all parameters. 

This scheduling idea is not specific to quantum phase estimation.  We expect it to apply to general
adiabatic algorithms whose relevant dynamics reduce, exactly or perturbatively,
to a family of uniformly controlled avoided crossings.  The clearest example is
local adiabatic Grover search: the usual Roland--Cerf schedule already gives the
optimal $\Theta(\sqrt{N/M})$ search time, while reparameterizing the linear
Roland--Cerf clock by a Denjoy--Carleman clock suppresses
the single-run diabatic failure probability exponentially in the
dimensionless adiabatic parameter. Similar remarks apply to fixed-point
adiabatic search and adiabatic amplitude-amplification/state-conversion
problems. Our result thus suggests a general route to recovering the complexity of optimal gate-based quantum algorithms adiabatically.

%combine locally optimal gap-dependent schedules with endpoint-flat superadiabatic clocks to suppress diabatic error without giving up the optimal runtime scaling.

%Overall, w QPE can be implemented near-optimally adiabatically.

%We do not expect such a statement to hold for arbitrary
%many-level adiabatic algorithms without additional control of the normalized
%eigenpath geometry.

%We expect our schedule to find applications in adiabatic algorithms beyond the QPE-setting discussed here.

%The key technical result is a
%uniform near-exponential leakage bound  under
%a local action-coordinate Denjoy--Carleman schedule, yielding  %Our result shows that Heisenberg-limited QPE can be achieved adiabatically when the Hamiltonian coupling $Z\otimes H_S$ is native.

\begin{acknowledgments}
We thank Eddie Farhi, Alioscia Hamma, Aram Harrow, and Rolando Somma for helpful discussions. A.S. is supported by a Google PhD Fellowship and the Simons Foundation (MP-SIP-00001553, AWH).
S.L. is supported by, or in part by, the U. S. Army Research Laboratory and the U. S. Army Research Office under contract/grant number W911NF2310255, and by DoE under contract DE-SC0012704. 

\end{acknowledgments}

\bibliography{references}

%apsrev4-2.bst 2019-01-14 (MD) hand-edited version of apsrev4-1.bst
%Control: key (0)
%Control: author (8) initials jnrlst
%Control: editor formatted (1) identically to author
%Control: production of article title (0) allowed
%Control: page (0) single
%Control: year (1) truncated
%Control: production of eprint (0) enabled
\begin{thebibliography}{39}%
\makeatletter
\providecommand \@ifxundefined [1]{%
 \@ifx{#1\undefined}
}%
\providecommand \@ifnum [1]{%
 \ifnum #1\expandafter \@firstoftwo
 \else \expandafter \@secondoftwo
 \fi
}%
\providecommand \@ifx [1]{%
 \ifx #1\expandafter \@firstoftwo
 \else \expandafter \@secondoftwo
 \fi
}%
\providecommand \natexlab [1]{#1}%
\providecommand \enquote  [1]{``#1''}%
\providecommand \bibnamefont  [1]{#1}%
\providecommand \bibfnamefont [1]{#1}%
\providecommand \citenamefont [1]{#1}%
\providecommand \href@noop [0]{\@secondoftwo}%
\providecommand \href [0]{\begingroup \@sanitize@url \@href}%
\providecommand \@href[1]{\@@startlink{#1}\@@href}%
\providecommand \@@href[1]{\endgroup#1\@@endlink}%
\providecommand \@sanitize@url [0]{\catcode `\\12\catcode `\$12\catcode `\&12\catcode `\#12\catcode `\^12\catcode `\_12\catcode `\%12\relax}%
\providecommand \@@startlink[1]{}%
\providecommand \@@endlink[0]{}%
\providecommand \url  [0]{\begingroup\@sanitize@url \@url }%
\providecommand \@url [1]{\endgroup\@href {#1}{\urlprefix }}%
\providecommand \urlprefix  [0]{URL }%
\providecommand \Eprint [0]{\href }%
\providecommand \doibase [0]{https://doi.org/}%
\providecommand \selectlanguage [0]{\@gobble}%
\providecommand \bibinfo  [0]{\@secondoftwo}%
\providecommand \bibfield  [0]{\@secondoftwo}%
\providecommand \translation [1]{[#1]}%
\providecommand \BibitemOpen [0]{}%
\providecommand \bibitemStop [0]{}%
\providecommand \bibitemNoStop [0]{.\EOS\space}%
\providecommand \EOS [0]{\spacefactor3000\relax}%
\providecommand \BibitemShut  [1]{\csname bibitem#1\endcsname}%
\let\auto@bib@innerbib\@empty
%</preamble>
\bibitem [{\citenamefont {Kitaev}(1995)}]{Kitaev1995}%
  \BibitemOpen
  \bibfield  {author} {\bibinfo {author} {\bibfnamefont {A.~Y.}\ \bibnamefont {Kitaev}},\ }\href@noop {} {\bibinfo {title} {Quantum measurements and the {Abelian} stabilizer problem}} (\bibinfo {year} {1995}),\ \Eprint {https://arxiv.org/abs/quant-ph/9511026} {arXiv:quant-ph/9511026} \BibitemShut {NoStop}%
\bibitem [{\citenamefont {Cleve}\ \emph {et~al.}(1998)\citenamefont {Cleve}, \citenamefont {Ekert}, \citenamefont {Macchiavello},\ and\ \citenamefont {Mosca}}]{Cleve1998}%
  \BibitemOpen
  \bibfield  {author} {\bibinfo {author} {\bibfnamefont {R.}~\bibnamefont {Cleve}}, \bibinfo {author} {\bibfnamefont {A.}~\bibnamefont {Ekert}}, \bibinfo {author} {\bibfnamefont {C.}~\bibnamefont {Macchiavello}},\ and\ \bibinfo {author} {\bibfnamefont {M.}~\bibnamefont {Mosca}},\ }\bibfield  {title} {\bibinfo {title} {Quantum algorithms revisited},\ }\href@noop {} {\bibfield  {journal} {\bibinfo  {journal} {Proc. R. Soc. Lond. A}\ }\textbf {\bibinfo {volume} {454}},\ \bibinfo {pages} {339} (\bibinfo {year} {1998})}\BibitemShut {NoStop}%
\bibitem [{\citenamefont {Nielsen}\ and\ \citenamefont {Chuang}(2000)}]{NielsenChuang}%
  \BibitemOpen
  \bibfield  {author} {\bibinfo {author} {\bibfnamefont {M.~A.}\ \bibnamefont {Nielsen}}\ and\ \bibinfo {author} {\bibfnamefont {I.~L.}\ \bibnamefont {Chuang}},\ }\href@noop {} {\emph {\bibinfo {title} {Quantum Computation and Quantum Information}}}\ (\bibinfo  {publisher} {Cambridge University Press},\ \bibinfo {address} {Cambridge},\ \bibinfo {year} {2000})\BibitemShut {NoStop}%
\bibitem [{\citenamefont {Shor}(1994)}]{Shor1994}%
  \BibitemOpen
  \bibfield  {author} {\bibinfo {author} {\bibfnamefont {P.~W.}\ \bibnamefont {Shor}},\ }\bibfield  {title} {\bibinfo {title} {Algorithms for quantum computation: Discrete logarithms and factoring},\ }in\ \href@noop {} {\emph {\bibinfo {booktitle} {Proceedings of the 35th Annual Symposium on Foundations of Computer Science}}}\ (\bibinfo  {publisher} {IEEE},\ \bibinfo {year} {1994})\ pp.\ \bibinfo {pages} {124--134}\BibitemShut {NoStop}%
\bibitem [{\citenamefont {Abrams}\ and\ \citenamefont {Lloyd}(1999)}]{AbramsLloyd1999}%
  \BibitemOpen
  \bibfield  {author} {\bibinfo {author} {\bibfnamefont {D.~S.}\ \bibnamefont {Abrams}}\ and\ \bibinfo {author} {\bibfnamefont {S.}~\bibnamefont {Lloyd}},\ }\bibfield  {title} {\bibinfo {title} {Quantum algorithm providing exponential speed increase for finding eigenvalues and eigenvectors},\ }\href@noop {} {\bibfield  {journal} {\bibinfo  {journal} {Phys. Rev. Lett.}\ }\textbf {\bibinfo {volume} {83}},\ \bibinfo {pages} {5162} (\bibinfo {year} {1999})}\BibitemShut {NoStop}%
\bibitem [{\citenamefont {Aspuru-Guzik}\ \emph {et~al.}(2005)\citenamefont {Aspuru-Guzik}, \citenamefont {Dutoi}, \citenamefont {Love},\ and\ \citenamefont {Head-Gordon}}]{AspuruGuzik2005}%
  \BibitemOpen
  \bibfield  {author} {\bibinfo {author} {\bibfnamefont {A.}~\bibnamefont {Aspuru-Guzik}}, \bibinfo {author} {\bibfnamefont {A.~D.}\ \bibnamefont {Dutoi}}, \bibinfo {author} {\bibfnamefont {P.~J.}\ \bibnamefont {Love}},\ and\ \bibinfo {author} {\bibfnamefont {M.}~\bibnamefont {Head-Gordon}},\ }\bibfield  {title} {\bibinfo {title} {Simulated quantum computation of molecular energies},\ }\href@noop {} {\bibfield  {journal} {\bibinfo  {journal} {Science}\ }\textbf {\bibinfo {volume} {309}},\ \bibinfo {pages} {1704} (\bibinfo {year} {2005})}\BibitemShut {NoStop}%
\bibitem [{\citenamefont {Harrow}\ \emph {et~al.}(2009)\citenamefont {Harrow}, \citenamefont {Hassidim},\ and\ \citenamefont {Lloyd}}]{HHL2009}%
  \BibitemOpen
  \bibfield  {author} {\bibinfo {author} {\bibfnamefont {A.~W.}\ \bibnamefont {Harrow}}, \bibinfo {author} {\bibfnamefont {A.}~\bibnamefont {Hassidim}},\ and\ \bibinfo {author} {\bibfnamefont {S.}~\bibnamefont {Lloyd}},\ }\bibfield  {title} {\bibinfo {title} {Quantum algorithm for linear systems of equations},\ }\href@noop {} {\bibfield  {journal} {\bibinfo  {journal} {Phys. Rev. Lett.}\ }\textbf {\bibinfo {volume} {103}},\ \bibinfo {pages} {150502} (\bibinfo {year} {2009})}\BibitemShut {NoStop}%
\bibitem [{\citenamefont {Giovannetti}\ \emph {et~al.}(2006)\citenamefont {Giovannetti}, \citenamefont {Lloyd},\ and\ \citenamefont {Maccone}}]{GLM2006}%
  \BibitemOpen
  \bibfield  {author} {\bibinfo {author} {\bibfnamefont {V.}~\bibnamefont {Giovannetti}}, \bibinfo {author} {\bibfnamefont {S.}~\bibnamefont {Lloyd}},\ and\ \bibinfo {author} {\bibfnamefont {L.}~\bibnamefont {Maccone}},\ }\bibfield  {title} {\bibinfo {title} {Quantum metrology},\ }\href@noop {} {\bibfield  {journal} {\bibinfo  {journal} {Phys. Rev. Lett.}\ }\textbf {\bibinfo {volume} {96}},\ \bibinfo {pages} {010401} (\bibinfo {year} {2006})}\BibitemShut {NoStop}%
\bibitem [{\citenamefont {Lloyd}(1996)}]{Lloyd1996}%
  \BibitemOpen
  \bibfield  {author} {\bibinfo {author} {\bibfnamefont {S.}~\bibnamefont {Lloyd}},\ }\bibfield  {title} {\bibinfo {title} {Universal quantum simulators},\ }\href@noop {} {\bibfield  {journal} {\bibinfo  {journal} {Science}\ }\textbf {\bibinfo {volume} {273}},\ \bibinfo {pages} {1073} (\bibinfo {year} {1996})}\BibitemShut {NoStop}%
\bibitem [{\citenamefont {Berry}\ \emph {et~al.}(2015)\citenamefont {Berry}, \citenamefont {Childs},\ and\ \citenamefont {Kothari}}]{BerryChildsKothari2015}%
  \BibitemOpen
  \bibfield  {author} {\bibinfo {author} {\bibfnamefont {D.~W.}\ \bibnamefont {Berry}}, \bibinfo {author} {\bibfnamefont {A.~M.}\ \bibnamefont {Childs}},\ and\ \bibinfo {author} {\bibfnamefont {R.}~\bibnamefont {Kothari}},\ }\bibfield  {title} {\bibinfo {title} {Hamiltonian simulation with nearly optimal dependence on all parameters},\ }in\ \href@noop {} {\emph {\bibinfo {booktitle} {Proceedings of the 56th IEEE Symposium on Foundations of Computer Science}}}\ (\bibinfo  {publisher} {IEEE},\ \bibinfo {year} {2015})\ pp.\ \bibinfo {pages} {792--809}\BibitemShut {NoStop}%
\bibitem [{\citenamefont {Low}\ and\ \citenamefont {Chuang}(2019)}]{LowChuang2019}%
  \BibitemOpen
  \bibfield  {author} {\bibinfo {author} {\bibfnamefont {G.~H.}\ \bibnamefont {Low}}\ and\ \bibinfo {author} {\bibfnamefont {I.~L.}\ \bibnamefont {Chuang}},\ }\bibfield  {title} {\bibinfo {title} {Hamiltonian simulation by qubitization},\ }\href@noop {} {\bibfield  {journal} {\bibinfo  {journal} {Quantum}\ }\textbf {\bibinfo {volume} {3}},\ \bibinfo {pages} {163} (\bibinfo {year} {2019})}\BibitemShut {NoStop}%
\bibitem [{\citenamefont {Mande}\ and\ \citenamefont {de~Wolf}(2023)}]{mande2023tight}%
  \BibitemOpen
  \bibfield  {author} {\bibinfo {author} {\bibfnamefont {N.~S.}\ \bibnamefont {Mande}}\ and\ \bibinfo {author} {\bibfnamefont {R.}~\bibnamefont {de~Wolf}},\ }\bibfield  {title} {\bibinfo {title} {Tight bounds for quantum phase estimation and related problems},\ }in\ \href@noop {} {\emph {\bibinfo {booktitle} {31st Annual European Symposium on Algorithms (ESA 2023)}}},\ \bibinfo {series} {LIPIcs}, Vol.\ \bibinfo {volume} {274}\ (\bibinfo  {publisher} {Schloss Dagstuhl},\ \bibinfo {year} {2023})\ pp.\ \bibinfo {pages} {81:1--81:16}\BibitemShut {NoStop}%
\bibitem [{\citenamefont {Roland}\ and\ \citenamefont {Cerf}(2002)}]{RolandCerf2002}%
  \BibitemOpen
  \bibfield  {author} {\bibinfo {author} {\bibfnamefont {J.}~\bibnamefont {Roland}}\ and\ \bibinfo {author} {\bibfnamefont {N.~J.}\ \bibnamefont {Cerf}},\ }\bibfield  {title} {\bibinfo {title} {Quantum search by local adiabatic evolution},\ }\href@noop {} {\bibfield  {journal} {\bibinfo  {journal} {Phys. Rev. A}\ }\textbf {\bibinfo {volume} {65}},\ \bibinfo {pages} {042308} (\bibinfo {year} {2002})}\BibitemShut {NoStop}%
\bibitem [{\citenamefont {Aharonov}\ \emph {et~al.}(2007)\citenamefont {Aharonov}, \citenamefont {van Dam}, \citenamefont {Kempe}, \citenamefont {Landau}, \citenamefont {Lloyd},\ and\ \citenamefont {Regev}}]{Aharonov2007}%
  \BibitemOpen
  \bibfield  {author} {\bibinfo {author} {\bibfnamefont {D.}~\bibnamefont {Aharonov}}, \bibinfo {author} {\bibfnamefont {W.}~\bibnamefont {van Dam}}, \bibinfo {author} {\bibfnamefont {J.}~\bibnamefont {Kempe}}, \bibinfo {author} {\bibfnamefont {Z.}~\bibnamefont {Landau}}, \bibinfo {author} {\bibfnamefont {S.}~\bibnamefont {Lloyd}},\ and\ \bibinfo {author} {\bibfnamefont {O.}~\bibnamefont {Regev}},\ }\bibfield  {title} {\bibinfo {title} {Adiabatic quantum computation is equivalent to standard quantum computation},\ }\href@noop {} {\bibfield  {journal} {\bibinfo  {journal} {SIAM J. Comput.}\ }\textbf {\bibinfo {volume} {37}},\ \bibinfo {pages} {166} (\bibinfo {year} {2007})}\BibitemShut {NoStop}%
\bibitem [{\citenamefont {Gietka}\ \emph {et~al.}(2021)\citenamefont {Gietka}, \citenamefont {Metz}, \citenamefont {Keller},\ and\ \citenamefont {Li}}]{gietka2021adiabatic}%
  \BibitemOpen
  \bibfield  {author} {\bibinfo {author} {\bibfnamefont {K.}~\bibnamefont {Gietka}}, \bibinfo {author} {\bibfnamefont {F.}~\bibnamefont {Metz}}, \bibinfo {author} {\bibfnamefont {T.}~\bibnamefont {Keller}},\ and\ \bibinfo {author} {\bibfnamefont {J.}~\bibnamefont {Li}},\ }\bibfield  {title} {\bibinfo {title} {Adiabatic critical quantum metrology cannot reach the {Heisenberg} limit even when shortcuts to adiabaticity are applied},\ }\href@noop {} {\bibfield  {journal} {\bibinfo  {journal} {Quantum}\ }\textbf {\bibinfo {volume} {5}},\ \bibinfo {pages} {489} (\bibinfo {year} {2021})}\BibitemShut {NoStop}%
\bibitem [{\citenamefont {Lidar}\ \emph {et~al.}(2009)\citenamefont {Lidar}, \citenamefont {Rezakhani},\ and\ \citenamefont {Hamma}}]{lidar2009adiabatic}%
  \BibitemOpen
  \bibfield  {author} {\bibinfo {author} {\bibfnamefont {D.~A.}\ \bibnamefont {Lidar}}, \bibinfo {author} {\bibfnamefont {A.~T.}\ \bibnamefont {Rezakhani}},\ and\ \bibinfo {author} {\bibfnamefont {A.}~\bibnamefont {Hamma}},\ }\bibfield  {title} {\bibinfo {title} {Adiabatic approximation with exponential accuracy for many-body systems and quantum computation},\ }\href@noop {} {\bibfield  {journal} {\bibinfo  {journal} {J. Math. Phys.}\ }\textbf {\bibinfo {volume} {50}},\ \bibinfo {pages} {102106} (\bibinfo {year} {2009})}\BibitemShut {NoStop}%
\bibitem [{\citenamefont {Farhi}\ \emph {et~al.}(2000)\citenamefont {Farhi}, \citenamefont {Goldstone}, \citenamefont {Gutmann},\ and\ \citenamefont {Sipser}}]{farhi2000quantum}%
  \BibitemOpen
  \bibfield  {author} {\bibinfo {author} {\bibfnamefont {E.}~\bibnamefont {Farhi}}, \bibinfo {author} {\bibfnamefont {J.}~\bibnamefont {Goldstone}}, \bibinfo {author} {\bibfnamefont {S.}~\bibnamefont {Gutmann}},\ and\ \bibinfo {author} {\bibfnamefont {M.}~\bibnamefont {Sipser}},\ }\href@noop {} {\bibinfo {title} {Quantum computation by adiabatic evolution}} (\bibinfo {year} {2000}),\ \Eprint {https://arxiv.org/abs/quant-ph/0001106} {arXiv:quant-ph/0001106} \BibitemShut {NoStop}%
\bibitem [{\citenamefont {Farhi}\ \emph {et~al.}(2001)\citenamefont {Farhi}, \citenamefont {Goldstone}, \citenamefont {Gutmann}, \citenamefont {Lapan}, \citenamefont {Lundgren},\ and\ \citenamefont {Preda}}]{farhi2001quantum}%
  \BibitemOpen
  \bibfield  {author} {\bibinfo {author} {\bibfnamefont {E.}~\bibnamefont {Farhi}}, \bibinfo {author} {\bibfnamefont {J.}~\bibnamefont {Goldstone}}, \bibinfo {author} {\bibfnamefont {S.}~\bibnamefont {Gutmann}}, \bibinfo {author} {\bibfnamefont {J.}~\bibnamefont {Lapan}}, \bibinfo {author} {\bibfnamefont {A.}~\bibnamefont {Lundgren}},\ and\ \bibinfo {author} {\bibfnamefont {D.}~\bibnamefont {Preda}},\ }\bibfield  {title} {\bibinfo {title} {A quantum adiabatic evolution algorithm applied to random instances of an {NP}-complete problem},\ }\href@noop {} {\bibfield  {journal} {\bibinfo  {journal} {Science}\ }\textbf {\bibinfo {volume} {292}},\ \bibinfo {pages} {472} (\bibinfo {year} {2001})}\BibitemShut {NoStop}%
\bibitem [{\citenamefont {Du}\ \emph {et~al.}(2008)\citenamefont {Du}, \citenamefont {Hu}, \citenamefont {Wang}, \citenamefont {Wu}, \citenamefont {Zhao},\ and\ \citenamefont {Suter}}]{du2008experimental}%
  \BibitemOpen
  \bibfield  {author} {\bibinfo {author} {\bibfnamefont {J.}~\bibnamefont {Du}}, \bibinfo {author} {\bibfnamefont {L.}~\bibnamefont {Hu}}, \bibinfo {author} {\bibfnamefont {Y.}~\bibnamefont {Wang}}, \bibinfo {author} {\bibfnamefont {J.}~\bibnamefont {Wu}}, \bibinfo {author} {\bibfnamefont {M.}~\bibnamefont {Zhao}},\ and\ \bibinfo {author} {\bibfnamefont {D.}~\bibnamefont {Suter}},\ }\bibfield  {title} {\bibinfo {title} {Experimental study of the validity of quantitative conditions in the quantum adiabatic theorem},\ }\href@noop {} {\bibfield  {journal} {\bibinfo  {journal} {Phys. Rev. Lett.}\ }\textbf {\bibinfo {volume} {101}},\ \bibinfo {pages} {060403} (\bibinfo {year} {2008})}\BibitemShut {NoStop}%
\bibitem [{\citenamefont {Messiah}(1962)}]{messiah1962quantum}%
  \BibitemOpen
  \bibfield  {author} {\bibinfo {author} {\bibfnamefont {A.}~\bibnamefont {Messiah}},\ }\href@noop {} {\emph {\bibinfo {title} {Quantum Mechanics, Vol. II}}}\ (\bibinfo  {publisher} {North-Holland Publishing Company},\ \bibinfo {address} {Amsterdam},\ \bibinfo {year} {1962})\BibitemShut {NoStop}%
\bibitem [{\citenamefont {Rainer}(2022)}]{Rainer2022}%
  \BibitemOpen
  \bibfield  {author} {\bibinfo {author} {\bibfnamefont {A.}~\bibnamefont {Rainer}},\ }\bibfield  {title} {\bibinfo {title} {Ultradifferentiable extension theorems: A survey},\ }\href@noop {} {\bibfield  {journal} {\bibinfo  {journal} {Expo. Math.}\ }\textbf {\bibinfo {volume} {40}},\ \bibinfo {pages} {679} (\bibinfo {year} {2022})}\BibitemShut {NoStop}%
\bibitem [{\citenamefont {Kriegl}\ \emph {et~al.}(2009)\citenamefont {Kriegl}, \citenamefont {Michor},\ and\ \citenamefont {Rainer}}]{KrieglMichorRainer2009}%
  \BibitemOpen
  \bibfield  {author} {\bibinfo {author} {\bibfnamefont {A.}~\bibnamefont {Kriegl}}, \bibinfo {author} {\bibfnamefont {P.~W.}\ \bibnamefont {Michor}},\ and\ \bibinfo {author} {\bibfnamefont {A.}~\bibnamefont {Rainer}},\ }\bibfield  {title} {\bibinfo {title} {The convenient setting for non-quasianalytic superadiabatic differentiable mappings},\ }\href@noop {} {\bibfield  {journal} {\bibinfo  {journal} {J. Funct. Anal.}\ }\textbf {\bibinfo {volume} {256}},\ \bibinfo {pages} {3510} (\bibinfo {year} {2009})}\BibitemShut {NoStop}%
\bibitem [{\citenamefont {Nenciu}(1993)}]{Nenciu1993}%
  \BibitemOpen
  \bibfield  {author} {\bibinfo {author} {\bibfnamefont {G.}~\bibnamefont {Nenciu}},\ }\bibfield  {title} {\bibinfo {title} {Linear adiabatic theory. exponential estimates},\ }\href@noop {} {\bibfield  {journal} {\bibinfo  {journal} {Commun. Math. Phys.}\ }\textbf {\bibinfo {volume} {152}},\ \bibinfo {pages} {479} (\bibinfo {year} {1993})}\BibitemShut {NoStop}%
\bibitem [{\citenamefont {Jung}(2000)}]{Jung2000}%
  \BibitemOpen
  \bibfield  {author} {\bibinfo {author} {\bibfnamefont {K.}~\bibnamefont {Jung}},\ }\href {https://cds.cern.ch/record/441150} {\emph {\bibinfo {title} {The Adiabatic Theorem for Switching Processes with {Gevrey} Class Regularity}}},\ \bibinfo {type} {Tech. Rep.}\ \bibinfo {number} {Preprint 442}\ (\bibinfo  {institution} {Sonderforschungsbereich 288, Technische Universit{\"a}t Berlin},\ \bibinfo {year} {2000})\ \bibinfo {note} {cERN-EXT-2000-159}\BibitemShut {NoStop}%
\bibitem [{\citenamefont {Elgart}\ and\ \citenamefont {Hagedorn}(2012)}]{ElgartHagedorn2012}%
  \BibitemOpen
  \bibfield  {author} {\bibinfo {author} {\bibfnamefont {A.}~\bibnamefont {Elgart}}\ and\ \bibinfo {author} {\bibfnamefont {G.~A.}\ \bibnamefont {Hagedorn}},\ }\bibfield  {title} {\bibinfo {title} {A note on the switching adiabatic theorem},\ }\href {https://doi.org/10.1063/1.4748968} {\bibfield  {journal} {\bibinfo  {journal} {Journal of Mathematical Physics}\ }\textbf {\bibinfo {volume} {53}},\ \bibinfo {pages} {102202} (\bibinfo {year} {2012})},\ \Eprint {https://arxiv.org/abs/1204.2318} {arXiv:1204.2318 [math-ph]} \BibitemShut {NoStop}%
\bibitem [{\citenamefont {Childs}\ \emph {et~al.}(2002)\citenamefont {Childs}, \citenamefont {Farhi},\ and\ \citenamefont {Preskill}}]{ChildsFarhiPreskill2002}%
  \BibitemOpen
  \bibfield  {author} {\bibinfo {author} {\bibfnamefont {A.~M.}\ \bibnamefont {Childs}}, \bibinfo {author} {\bibfnamefont {E.}~\bibnamefont {Farhi}},\ and\ \bibinfo {author} {\bibfnamefont {J.}~\bibnamefont {Preskill}},\ }\bibfield  {title} {\bibinfo {title} {Robustness of adiabatic quantum computation},\ }\href {https://doi.org/10.1103/PhysRevA.65.012322} {\bibfield  {journal} {\bibinfo  {journal} {Physical Review A}\ }\textbf {\bibinfo {volume} {65}},\ \bibinfo {pages} {012322} (\bibinfo {year} {2002})}\BibitemShut {NoStop}%
\bibitem [{\citenamefont {Sarandy}\ and\ \citenamefont {Lidar}(2005)}]{SarandyLidar2005}%
  \BibitemOpen
  \bibfield  {author} {\bibinfo {author} {\bibfnamefont {M.~S.}\ \bibnamefont {Sarandy}}\ and\ \bibinfo {author} {\bibfnamefont {D.~A.}\ \bibnamefont {Lidar}},\ }\bibfield  {title} {\bibinfo {title} {Adiabatic quantum computation in open systems},\ }\href {https://doi.org/10.1103/PhysRevLett.95.250503} {\bibfield  {journal} {\bibinfo  {journal} {Physical Review Letters}\ }\textbf {\bibinfo {volume} {95}},\ \bibinfo {pages} {250503} (\bibinfo {year} {2005})}\BibitemShut {NoStop}%
\bibitem [{\citenamefont {Albash}\ \emph {et~al.}(2012)\citenamefont {Albash}, \citenamefont {Boixo}, \citenamefont {Lidar},\ and\ \citenamefont {Zanardi}}]{AlbashBoixoLidarZanardi2012}%
  \BibitemOpen
  \bibfield  {author} {\bibinfo {author} {\bibfnamefont {T.}~\bibnamefont {Albash}}, \bibinfo {author} {\bibfnamefont {S.}~\bibnamefont {Boixo}}, \bibinfo {author} {\bibfnamefont {D.~A.}\ \bibnamefont {Lidar}},\ and\ \bibinfo {author} {\bibfnamefont {P.}~\bibnamefont {Zanardi}},\ }\bibfield  {title} {\bibinfo {title} {Quantum adiabatic markovian master equations},\ }\href {https://doi.org/10.1088/1367-2630/14/12/123016} {\bibfield  {journal} {\bibinfo  {journal} {New Journal of Physics}\ }\textbf {\bibinfo {volume} {14}},\ \bibinfo {pages} {123016} (\bibinfo {year} {2012})}\BibitemShut {NoStop}%
\bibitem [{\citenamefont {Albash}\ and\ \citenamefont {Lidar}(2015)}]{AlbashLidar2015}%
  \BibitemOpen
  \bibfield  {author} {\bibinfo {author} {\bibfnamefont {T.}~\bibnamefont {Albash}}\ and\ \bibinfo {author} {\bibfnamefont {D.~A.}\ \bibnamefont {Lidar}},\ }\bibfield  {title} {\bibinfo {title} {Decoherence in adiabatic quantum computation},\ }\href {https://doi.org/10.1103/PhysRevA.91.062320} {\bibfield  {journal} {\bibinfo  {journal} {Physical Review A}\ }\textbf {\bibinfo {volume} {91}},\ \bibinfo {pages} {062320} (\bibinfo {year} {2015})}\BibitemShut {NoStop}%
\bibitem [{\citenamefont {Albash}\ and\ \citenamefont {Lidar}(2018)}]{AlbashLidar2018}%
  \BibitemOpen
  \bibfield  {author} {\bibinfo {author} {\bibfnamefont {T.}~\bibnamefont {Albash}}\ and\ \bibinfo {author} {\bibfnamefont {D.~A.}\ \bibnamefont {Lidar}},\ }\bibfield  {title} {\bibinfo {title} {Adiabatic quantum computation},\ }\href@noop {} {\bibfield  {journal} {\bibinfo  {journal} {Rev. Mod. Phys.}\ }\textbf {\bibinfo {volume} {90}},\ \bibinfo {pages} {015002} (\bibinfo {year} {2018})}\BibitemShut {NoStop}%
\bibitem [{\citenamefont {Paz}\ and\ \citenamefont {Zurek}(1999)}]{paz1999quantum}%
  \BibitemOpen
  \bibfield  {author} {\bibinfo {author} {\bibfnamefont {J.~P.}\ \bibnamefont {Paz}}\ and\ \bibinfo {author} {\bibfnamefont {W.~H.}\ \bibnamefont {Zurek}},\ }\bibfield  {title} {\bibinfo {title} {Quantum limit of decoherence: Environment induced superselection of energy eigenstates},\ }\href@noop {} {\bibfield  {journal} {\bibinfo  {journal} {Physical Review Letters}\ }\textbf {\bibinfo {volume} {82}},\ \bibinfo {pages} {5181} (\bibinfo {year} {1999})}\BibitemShut {NoStop}%
\bibitem [{\citenamefont {von Neumann}(1955)}]{vonNeumann1955}%
  \BibitemOpen
  \bibfield  {author} {\bibinfo {author} {\bibfnamefont {J.}~\bibnamefont {von Neumann}},\ }\href@noop {} {\emph {\bibinfo {title} {Mathematical Foundations of Quantum Mechanics}}}\ (\bibinfo  {publisher} {Princeton University Press},\ \bibinfo {year} {1955})\BibitemShut {NoStop}%
\bibitem [{\citenamefont {Griffiths}\ and\ \citenamefont {Niu}(1996)}]{GriffithsNiu1996}%
  \BibitemOpen
  \bibfield  {author} {\bibinfo {author} {\bibfnamefont {R.~B.}\ \bibnamefont {Griffiths}}\ and\ \bibinfo {author} {\bibfnamefont {C.-S.}\ \bibnamefont {Niu}},\ }\bibfield  {title} {\bibinfo {title} {Semiclassical fourier transform for quantum computation},\ }\href {https://doi.org/10.1103/PhysRevLett.76.3228} {\bibfield  {journal} {\bibinfo  {journal} {Physical Review Letters}\ }\textbf {\bibinfo {volume} {76}},\ \bibinfo {pages} {3228} (\bibinfo {year} {1996})}\BibitemShut {NoStop}%
\bibitem [{\citenamefont {Lin}\ and\ \citenamefont {Wang}(2025)}]{lin2025analog}%
  \BibitemOpen
  \bibfield  {author} {\bibinfo {author} {\bibfnamefont {W.-C.}\ \bibnamefont {Lin}}\ and\ \bibinfo {author} {\bibfnamefont {C.-H.}\ \bibnamefont {Wang}},\ }\bibfield  {title} {\bibinfo {title} {Analog quantum phase estimation with single-mode readout},\ }\href@noop {} {\bibfield  {journal} {\bibinfo  {journal} {arXiv preprint arXiv:2506.15668}\ } (\bibinfo {year} {2025})}\BibitemShut {NoStop}%
\bibitem [{\citenamefont {Torosov}\ and\ \citenamefont {Vitanov}(2009)}]{torosov2009design}%
  \BibitemOpen
  \bibfield  {author} {\bibinfo {author} {\bibfnamefont {B.~T.}\ \bibnamefont {Torosov}}\ and\ \bibinfo {author} {\bibfnamefont {N.~V.}\ \bibnamefont {Vitanov}},\ }\bibfield  {title} {\bibinfo {title} {Design of quantum fourier transforms and quantum algorithms by using circulant hamiltonians},\ }\href@noop {} {\bibfield  {journal} {\bibinfo  {journal} {Physical Review A—Atomic, Molecular, and Optical Physics}\ }\textbf {\bibinfo {volume} {80}},\ \bibinfo {pages} {022329} (\bibinfo {year} {2009})}\BibitemShut {NoStop}%
\bibitem [{\citenamefont {Hen}(2014)}]{hen2014fourier}%
  \BibitemOpen
  \bibfield  {author} {\bibinfo {author} {\bibfnamefont {I.}~\bibnamefont {Hen}},\ }\bibfield  {title} {\bibinfo {title} {Fourier-transforming with quantum annealers},\ }\href@noop {} {\bibfield  {journal} {\bibinfo  {journal} {Frontiers in Physics}\ }\textbf {\bibinfo {volume} {2}},\ \bibinfo {pages} {44} (\bibinfo {year} {2014})}\BibitemShut {NoStop}%
\bibitem [{\citenamefont {Zhang}\ \emph {et~al.}(2010)\citenamefont {Zhang}, \citenamefont {Wei},\ and\ \citenamefont {Papageorgiou}}]{ZhangWeiPapageorgiou2010}%
  \BibitemOpen
  \bibfield  {author} {\bibinfo {author} {\bibfnamefont {C.}~\bibnamefont {Zhang}}, \bibinfo {author} {\bibfnamefont {Z.}~\bibnamefont {Wei}},\ and\ \bibinfo {author} {\bibfnamefont {A.}~\bibnamefont {Papageorgiou}},\ }\bibfield  {title} {\bibinfo {title} {Adiabatic quantum counting by geometric phase estimation},\ }\href@noop {} {\bibfield  {journal} {\bibinfo  {journal} {Quantum Inf. Process.}\ }\textbf {\bibinfo {volume} {9}},\ \bibinfo {pages} {369} (\bibinfo {year} {2010})}\BibitemShut {NoStop}%
\bibitem [{\citenamefont {Hagedorn}\ and\ \citenamefont {Joye}(2002)}]{HagedornJoye2002}%
  \BibitemOpen
  \bibfield  {author} {\bibinfo {author} {\bibfnamefont {G.~A.}\ \bibnamefont {Hagedorn}}\ and\ \bibinfo {author} {\bibfnamefont {A.}~\bibnamefont {Joye}},\ }\bibfield  {title} {\bibinfo {title} {Elementary exponential error estimates for the adiabatic approximation},\ }\href@noop {} {\bibfield  {journal} {\bibinfo  {journal} {J. Math. Anal. Appl.}\ }\textbf {\bibinfo {volume} {267}},\ \bibinfo {pages} {235} (\bibinfo {year} {2002})}\BibitemShut {NoStop}%
\bibitem [{\citenamefont {Jansen}\ \emph {et~al.}(2007)\citenamefont {Jansen}, \citenamefont {Ruskai},\ and\ \citenamefont {Seiler}}]{JRS2007}%
  \BibitemOpen
  \bibfield  {author} {\bibinfo {author} {\bibfnamefont {S.}~\bibnamefont {Jansen}}, \bibinfo {author} {\bibfnamefont {M.-B.}\ \bibnamefont {Ruskai}},\ and\ \bibinfo {author} {\bibfnamefont {R.}~\bibnamefont {Seiler}},\ }\bibfield  {title} {\bibinfo {title} {Bounds for the adiabatic approximation with applications to quantum computation},\ }\href@noop {} {\bibfield  {journal} {\bibinfo  {journal} {J. Math. Phys.}\ }\textbf {\bibinfo {volume} {48}},\ \bibinfo {pages} {102111} (\bibinfo {year} {2007})}\BibitemShut {NoStop}%
\end{thebibliography}%

\newpage

\appendix
\onecolumngrid

\section{Proof overview and conventions}
\manualref{A.1}\label{app:dc_leakage}

This appendix proves the two estimates used in the main text. It is deliberately explicit to provide a detailed and self-contained proof of our main results. The argument has four ingredients.  First, we construct a compactly supported flat clock in a non-quasianalytic Denjoy--Carleman class.  Second, we prove a finite-dimensional superadiabatic estimate with endpoint cancellation.  This is a standard superadiabatic construction in the spirit of Refs.~\cite{Nenciu1993,HagedornJoye2002,JRS2007,lidar2009adiabatic}, with one modification: the residual is estimated in \(L^1\) so that the constants do not depend on the length of the rescaled gap-time interval.  Third, we apply these constructions to the normalized two-level comparator obtained from the Roland--Cerf action coordinate~\cite{RolandCerf2002}.  Finally, we compose the one-stage leakage estimate over the \(m\) midpoint-comparison stages.

Throughout the appendix, \(q>1\) is fixed and
\begin{equation}
M_n=n!\,[\log(n+e)]^{qn},\qquad n\ge 0.
\label{eq:Mn_def}
\end{equation}
Constants denoted by \(C,R,c,C_0,C_1,\ldots\) may change from line to line.  They may depend on \(q\) and on the fixed clock, but never on \(\nu,\Delta,A,m,\eps\), or \(\delta\).  All operator norms are spectral norms, and \(\|X\|_1\) denotes the integral of \(\|X(z)\|\) over the relevant \(z\)-interval. We use the usual big-O convention with  $\widetilde{O}(f(n))$ suppressing polylogarithmic factors in $f(n)$. 

\section{The flat Denjoy--Carleman clock}
\manualref{A.2}\label{app:dc-clock}

We first record two elementary estimates for the weights \(M_n\), and then construct the endpoint-flat clock.  The following construction is known as the infinite-convolution construction of a non-quasianalytic Denjoy--Carleman cutoff; see, e.g., Refs.~\cite{KrieglMichorRainer2009,Rainer2022}.

\begin{lemma}[Product stability]
\label{lem:shifted_product}
There is a constant \(C_M\ge1\), depending only on \(q\), such that for all nonnegative integers \(a,b,r\) and \(0\le \ell\le r\),
\begin{equation}
\binom r\ell M_{a+\ell}M_{b+r-\ell}
\le
C_M^{a+b+r+1}M_{a+b+r}.
\label{eq:shifted_product}
\end{equation}
More generally, if \(a_j,r_j\ge0\), \(\sum_{j=1}^p a_j=a\), and \(\sum_{j=1}^p r_j=r\), then
\begin{equation}
\binom{r}{r_1,\ldots,r_p}
\prod_{j=1}^p M_{a_j+r_j}
\le
C_M^{a+r+p}M_{a+r}.
\label{eq:multishifted_product}
\end{equation}
After increasing \(C_M\), one also has
\begin{equation}
M_{N+1}
\le
C_M^N\bigl(N[\log(N+e)]^q\bigr)^N,
\qquad N\ge1.
\label{eq:MNplusone_bound}
\end{equation}
\end{lemma}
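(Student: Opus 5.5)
We split $M_n=n!\,L_n^{n}$ with $L_n=[\log(n+e)]^q$ and treat the factorial part and the logarithmic part separately. The factorial part is handled by exact combinatorial identities. The logarithmic part is handled by monotonicity: $L_n$ is nondecreasing in $n$ and $L_n\ge1$.

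\textbf{Two-factor bound \eqref{eq:shifted_product}.} Put $n_1=a+\ell$ and $n_2=b+r-\ell$, so that $n_1+n_2=a+b+r=:n$. For the factorials we write
\[
\binom r\ell (a+\ell)!\,(b+r-\ell)! = r!\,\frac{(a+\ell)!}{\ell!}\,\frac{(b+r-\ell)!}{(r-\ell)!}.
\]
We then use the elementary estimate $\frac{(a+\ell)!}{a!\,\ell!}=\binom{a+\ell}{a}\le 2^{a+\ell}$, and similarly for the second quotient. This gives
\[
\binom r\ell n_1!\,n_2!\le 2^{n}\,r!\,a!\,b!\le 2^{n}\,n!,
\]
where the last step uses $r!\,a!\,b!\le n!$ (a multinomial coefficient is at least $1$). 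For the logarithmic part, $L_{n_1}^{n_1}L_{n_2}^{n_2}\le L_n^{n_1+n_2}=L_n^n$ by monotonicity. Together these give the claim with $C_M=2$. The extra factor $C_M^{+1}$ in the exponent is harmless.

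\textbf{Multi-factor bound \eqref{eq:multishifted_product}.} The same argument applies verbatim:
\[
\binom{r}{r_1,\ldots,r_p}\prod_j (a_j+r_j)! = r!\prod_j\frac{(a_j+r_j)!}{r_j!}\le r!\prod_j 2^{a_j+r_j}a_j!\le 2^{a+r}\,r!\,a!\le 2^{a+r}(a+r)!.
\]
The logarithmic factors combine as $\prod_j L_{a_j+r_j}^{a_j+r_j}\le L_{a+r}^{a+r}$. Alternatively, one could induct on $p$ using the two-factor case, but that accumulates constants, which is exactly why the exponent contains $p$. The direct argument avoids this and gives the bound with $C_M=2$.

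\textbf{Bound \eqref{eq:MNplusone_bound}.} Write
\[
M_{N+1}=(N+1)!\,L_{N+1}^{N+1}.
\]
For the factorial, Stirling gives $(N+1)!\le (N+1)N!\le (N+1)N^N$. For the logarithmic part, we need $L_{N+1}\le C\,L_N$, since $\log(N+1+e)\le 2\log(N+e)$. The extra factor $L_{N+1}\le C\,[\log(N+e)]^q$ is then absorbed: both $N+1$ and $[\log(N+e)]^q$ grow subexponentially in $N$, so each is at most $C^N$ for $N\ge1$. Combining,
\[
M_{N+1}\le C^N N^N [\log(N+e)]^{qN},
\]
as claimed.

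\textbf{Main obstacle.} The only step that needs care is the inequality $\binom{a+\ell}{a}\le 2^{a+\ell}$ and the resulting decoupling of the shifted indices from the binomial coefficient. This is where the shifts $a,b$ are absorbed into the geometric factor $C_M^{a+b}$.
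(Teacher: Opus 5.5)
Your proof is correct and follows essentially the paper's route: the same rewriting $\binom{r}{r_1,\ldots,r_p}\prod_j(a_j+r_j)!=r!\prod_j(a_j+r_j)!/r_j!$, monotonicity of $[\log(n+e)]^q$ for the logarithmic factors, and the same absorption of $(N+1)$ and one extra logarithmic factor into $C^N$ for \eqref{eq:MNplusone_bound}. The only difference is cosmetic. The paper bounds $\prod_j(a_j+r_j)!/r_j!\le(a+r)^a\le e^a(a+r)!/r!$, which gives a factor $e^a$, whereas you use $\binom{a_j+r_j}{a_j}\le 2^{a_j+r_j}$ together with $r!\prod_j a_j!\le(a+r)!$, which gives $2^{a+r}$; the allowed slack $C_M^{a+r+p}$ absorbs either factor equally well.
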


\begin{proof}
For the logarithmic factors we use
\[
\log(a_j+r_j+e)\le \log(a+r+e).
\]
For the factorials,
\[
\binom{r}{r_1,\ldots,r_p}\prod_{j=1}^p(a_j+r_j)!
=
r!\prod_{j=1}^p\frac{(a_j+r_j)!}{r_j!}.
\]
Since
\[
\prod_{j=1}^p\frac{(a_j+r_j)!}{r_j!}
\le
(a+r)^a
\le
e^a\frac{(a+r)!}{r!},
\]
we obtain the multinomial estimate up to an exponential factor in \(a+r+p\).  The two-factor estimate is the case \(p=2\).  Finally, Stirling's bound and the elementary absorption of the extra factor \((N+1)[\log(N+e)]^q\) into \(C^N\) give \eqref{eq:MNplusone_bound}.
\end{proof}

\begin{lemma}[Flat Denjoy--Carleman switch]
\label{lem:dc_switch}
For every \(q>1\), there exists a monotone \(F_q\in C^\infty([0,1])\) such that
\begin{equation}
F_q(u)=0\quad(u\le1/4),\qquad
F_q(u)=1\quad(u\ge3/4),\qquad
F_q'(u)\ge0,
\label{eq:F_plateaus_app}
\end{equation}
and constants \(C_F,R_F>0\) such that
\begin{equation}
\|F_q^{(n)}\|_\infty
\le
C_FR_F^nM_n,
\qquad n\ge1.
\label{eq:F_derivative_app}
\end{equation}
For \(f_q=F_q'\), there are constants \(C_f,R_f>0\) such that
\begin{equation}
\|f_q^{(n)}\|_\infty
\le
C_fR_f^nM_n,
\qquad n\ge0.
\label{eq:f_derivative_app}
\end{equation}
\end{lemma}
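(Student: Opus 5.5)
We construct the switch by the infinite-convolution method, so that the Denjoy--Carleman bounds follow directly from the construction. Fix a positive sequence \(a_k\) with \(\sum_{k\ge1}a_k=1/4\), chosen so that the associated weight sequence is comparable to \(M_n\). A natural choice is \(a_k=c_q/(k[\log(k+e)]^q)\): since \(q>1\) the series converges, and we pick \(c_q\) to make the sum equal to \(1/4\). Let \(\chi_k=a_k^{-1}\mathbf 1_{[-a_k/2,a_k/2]}\) be the normalized box functions, and define
\[
\phi=\lim_{K\to\infty}\chi_1*\chi_2*\cdots*\chi_K .
\]
This limit is a probability density supported in \([-1/8,1/8]\), it is nonnegative, and it is even. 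Set \(F_q(u)=\int_{-\infty}^{u-1/2}\phi\). Then \(F_q\) is monotone, \(F_q'=\phi(\cdot-1/2)\ge0\), \(F_q=0\) for \(u\le 3/8\) and \(F_q=1\) for \(u\ge5/8\). In particular the plateau conditions \eqref{eq:F_plateaus_app} hold with room to spare.

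For the derivative bounds, the key observation is that differentiating a convolution with a box costs a finite difference: \((g*\chi_k)'=a_k^{-1}\bigl(g(\cdot+a_k/2)-g(\cdot-a_k/2)\bigr)\). We place the \(n\) derivatives on the first \(n\) factors and keep the remaining factors as a probability density. This gives
\[
\|\phi^{(n)}\|_\infty\le 2^n\prod_{k=1}^{n}a_k^{-1}\cdot\|\chi_{n+1}\|_\infty \le 2^n\prod_{k=1}^{n+1}a_k^{-1},
\]
and hence
\[
\prod_{k=1}^{n+1}a_k^{-1}=c_q^{-(n+1)}(n+1)!\prod_{k\le n+1}[\log(k+e)]^q\le C R^n\,n!\,[\log(n+e)]^{qn}.
\]
The last inequality uses \((n+1)\le 2^n\) and \(\log(n+1+e)^q\le C^{\,}\log(n+e)^q\), which is absorbed in the same way as in \eqref{eq:MNplusone_bound}. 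This proves \eqref{eq:f_derivative_app} for \(f_q=\phi(\cdot-1/2)\), for all \(n\ge0\).

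The bound \eqref{eq:F_derivative_app} then follows from \(F_q^{(n)}=f_q^{(n-1)}\) together with \(M_{n-1}\le M_n\). Smoothness of \(F_q\) follows from the uniform convergence of all derivatives of the partial convolutions: the tail factors only smooth the function further, and convolution with a probability density does not increase sup norms.

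The main obstacle is making the limit rigorous. We must show that the partial convolutions converge in \(C^n\) for each \(n\), and that the finite-difference bound survives the limit. To handle this, we bound \(\|(\psi_K-\psi_{K'})^{(n)}\|_\infty\) by \(\|\psi_{n+1}^{(n)}\|\) times the \(L^1\) distance between the tail products. That distance is at most \(\sum_{k>K}a_k\to0\), since a box of width \(a_k\) moves \(L^1\) mass by at most \(O(a_k)\) against a Lipschitz function.

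Once convergence is established, the derivative bounds pass to the limit because they hold uniformly in \(K\). The exact plateau endpoints \(1/4\) and \(3/4\) require no extra work, since the support of \(\phi\) lies strictly inside \([-1/4,1/4]\).
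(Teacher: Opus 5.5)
Your proposal is correct and follows the paper's proof essentially step for step. Both use the same infinite convolution of normalized boxes with widths $\propto 1/(k[\log(k+e)]^q)$, put the $n$ derivatives on the first $n$ boxes to get $\|\phi^{(n)}\|_\infty\le 2^n\prod_{k\le n+1}a_k^{-1}$, and define $F_q(u)=\int_{-\infty}^{u-1/2}\phi$. The differences are cosmetic: you index from $k=1$, normalize $\sum a_k=1/4$ so the plateaus are wider than required, and prove the $f_q$ bound first and deduce the $F_q$ bound via $F_q^{(n)}=f_q^{(n-1)}$ and $M_{n-1}\le M_n$, where the paper goes the other way. Your $C^n$-convergence sketch adds a step the paper only asserts; the $O(a_k)$ in your tail estimate hides an $n$-dependent constant such as $a_{n+2}^{-1}$, which is harmless.
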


\begin{proof}
Choose
\begin{equation}
a_j=\frac{a}{(j+2)[\log(j+e)]^q},
\qquad j\ge0,
\end{equation}
with \(a>0\) small enough that \(\sum_{j\ge0}a_j\le1/2\).  This is possible because \(q>1\).  Let
\[
\chi_j=a_j^{-1}{\bf 1}_{[-a_j/2,a_j/2]},
\qquad
\rho_N=\chi_0*\cdots*\chi_N .
\]
The sequence \(\rho_N\) converges, with all derivatives, to a compactly supported \(C^\infty\) probability density \(\rho\), supported in an interval of length at most \(1/2\).  For fixed \(n\) and \(N\ge n\), differentiating in the sense of distributions and placing the \(n\) derivatives on the first \(n\) box functions gives
\[
\rho_N^{(n)}
=
\chi_0'*\cdots*\chi_{n-1}'*\chi_n*\cdots*\chi_N .
\]
Since \(\|\chi_j'\|_{\mathrm{TV}}=2/a_j\), \(\|\chi_n\|_\infty=1/a_n\), and convolution with probability densities does not increase the \(L^\infty\) norm,
\begin{equation}
\|\rho_N^{(n)}\|_\infty
\le
\frac{2^n}{a_0a_1\cdots a_n}.
\end{equation}
Passing to the \(C^n\)-limit yields the same estimate for \(\rho\).  From the definition of \(a_j\),
\[
\frac{2^n}{a_0a_1\cdots a_n}
\le
CR^n(n+1)![\log(n+e)]^{q(n+1)}
\le
C'(R')^nM_n,
\]
where the final extra factor is absorbed into the exponential constant.

Define
\begin{equation}
F_q(u)=\int_{-\infty}^{u-1/2}\rho(x)\,dx .
\end{equation}
The support of \(\rho\) gives the plateaus in \eqref{eq:F_plateaus_app}, and monotonicity follows from \(\rho\ge0\).  Since \(F_q^{(n)}=\rho^{(n-1)}(\,\cdot-1/2)\) for \(n\ge1\), \eqref{eq:F_derivative_app} follows.  Finally, \(f_q^{(n)}=F_q^{(n+1)}\), and \(M_{n+1}\le C^nM_n\) after increasing the exponential constant, giving \eqref{eq:f_derivative_app}.
\end{proof}

\section{A weighted superadiabatic estimate}
\manualref{A.3}\label{app:weighted-superadiabatic}

The rescaled gap-time interval used below may have length of order \(\nu^{-1}\).  The following product class and superadiabatic theorem are arranged so that no constant grows with that interval length.

\begin{lemma}[Weighted \(L^\infty/L^1\) product calculus]
\label{lem:weighted_product_class}
Let \(I=[0,Z]\).  A smooth matrix-valued function \(X:I\to\mathbb C^{2\times2}\) has weighted order \(\mu\ge0\), with constants \(B,R\), if
\begin{equation}
\|X^{(r)}\|_\infty\le BR^rM_{\mu+r}\qquad(r\ge0),
\label{eq:weighted_order_infty}
\end{equation}
and, whenever \(\mu+r\ge1\),
\begin{equation}
\|X^{(r)}\|_1\le BR^rM_{\mu+r}.
\label{eq:weighted_order_one}
\end{equation}
No \(L^1\) bound is required in the single case \(\mu=r=0\).

If \(X\) has weighted order \(\mu\) and \(Y\) has weighted order \(\eta\), then \(XY\) has weighted order \(\mu+\eta\), with constants depending only on the previous constants and on \(q\).  The same holds for any fixed finite product.
\end{lemma}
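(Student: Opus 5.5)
The plan is to apply the Leibniz rule and reduce everything to the product stability estimate of Lemma~\ref{lem:shifted_product}, handling the \(L^\infty\) and \(L^1\) parts separately. For two factors, write
\[
(XY)^{(r)}=\sum_{\ell=0}^r\binom r\ell X^{(\ell)}Y^{(r-\ell)} .
\]
For the \(L^\infty\) bound, each term is estimated by \(\|X^{(\ell)}\|_\infty\|Y^{(r-\ell)}\|_\infty\le B_XB_YR^r\binom r\ell M_{\mu+\ell}M_{\eta+r-\ell}\), where \(R=\max(R_X,R_Y)\). Inequality \eqref{eq:shifted_product} with \(a=\mu\), \(b=\eta\) bounds each summand by \(C_M^{\mu+\eta+r+1}M_{\mu+\eta+r}\). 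The sum over \(\ell\) contributes at most \(r+1\le 2^r\) terms. Hence \(\|(XY)^{(r)}\|_\infty\le B' (R')^{r}C_M^{\mu+\eta}M_{\mu+\eta+r}\) with \(R'=2C_MR\). The factor \(C_M^{\mu+\eta}\) depends only on the fixed orders, so it is absorbed into \(B'\).

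For the \(L^1\) bound (needed when \(\mu+\eta+r\ge1\)), each Leibniz term is estimated by Hölder, \(\|X^{(\ell)}Y^{(r-\ell)}\|_1\le \|X^{(\ell)}\|_1\|Y^{(r-\ell)}\|_\infty\) or the symmetric choice. One factor must be taken in \(L^1\) and the other in \(L^\infty\); this is the only delicate point, because the class deliberately omits the \(L^1\) bound when \(\mu=r=0\) (e.g.\ a bounded but non-integrable-in-\(Z\) zeroth-order factor). The case split is:
\begin{itemize}
\item If \(\mu+\ell\ge1\), put \(X^{(\ell)}\) in \(L^1\) and \(Y^{(r-\ell)}\) in \(L^\infty\).
\item Otherwise \(\mu=\ell=0\). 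Since \(\mu+\eta+r\ge1\), we then have \(\eta+(r-\ell)\ge1\), so put \(Y^{(r)}\) in \(L^1\) and \(X\) in \(L^\infty\).
\end{itemize}
In either case the term is bounded by \(B_XB_YR^r\binom r\ell M_{\mu+\ell}M_{\eta+r-\ell}\), and the same summation as before gives the \(L^1\) estimate. No constant depends on \(Z\), because only one \(L^1\) norm ever appears.

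For a fixed finite product \(X_1\cdots X_p\), one could iterate the two-factor result. It is cleaner to use the multinomial Leibniz rule directly. Each term is \(\binom{r}{r_1,\dots,r_p}\prod_j X_j^{(r_j)}\). Bound all factors in \(L^\infty\), or, for the \(L^1\) estimate, one factor in \(L^1\): choose any \(j\) with \(\mu_j+r_j\ge1\), which exists since \(\sum_j(\mu_j+r_j)\ge1\). Then apply \eqref{eq:multishifted_product}. The number of multi-indices is at most \(\binom{r+p-1}{p-1}\le 2^{r+p}\), which is absorbed into the constants. The resulting constants depend on \(p\), the \(\mu_j\), the \(B_j,R_j\) and \(q\), as claimed.

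The main obstacle is the \(L^1\) bookkeeping just described: always finding a factor that carries a genuine \(L^1\) bound. The weighted-order index condition \(\mu+r\ge1\) is exactly what guarantees such a factor exists. Everything else is routine absorption of exponential factors.
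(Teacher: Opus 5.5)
Your proposal is correct and follows essentially the same route as the paper: Leibniz's rule, Lemma~\ref{lem:shifted_product} to control the weights, and, for the $L^1$ estimate, placing the $L^1$ norm on whichever factor has positive index $\mu+\ell$ or $\eta+r-\ell$ (one exists because $\mu+\eta+r\ge1$). The only difference is that you treat finite products directly via the multinomial Leibniz rule and \eqref{eq:multishifted_product} instead of iterating the two-factor case; the two are equivalent, and the $C_M^{\mu+\eta}$ factor you absorb into $B'$ arises in exactly the same way in the paper's argument.
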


\begin{proof}
Leibniz's rule gives
\[
(XY)^{(r)}=\sum_{\ell=0}^r\binom r\ell X^{(\ell)}Y^{(r-\ell)}.
\]
The \(L^\infty\) estimate follows from Lemma~\ref{lem:shifted_product}.  For the \(L^1\) estimate, assume \(\mu+\eta+r\ge1\).  In each Leibniz term at least one of \(\mu+\ell\) and \(\eta+r-\ell\) is positive; put the \(L^1\) norm on such a factor and the \(L^\infty\) norm on the other.  Lemma~\ref{lem:shifted_product} again controls the weights.  Iteration gives the finite-product claim.
\end{proof}

\begin{theorem}[Weighted global superadiabatic estimate]
\label{thm:weighted_global}
Let \(K:[0,Z]\to\mathbb C^{2\times2}\) be a smooth self-adjoint Hamiltonian with two simple eigenvalues separated by a gap at least \(2\).  Let \(P(z)\) be the instantaneous ground-state projector and \(Q(z)=\id-P(z)\).  Assume
\begin{equation}
K^{(n)}(0)=K^{(n)}(Z)=0\qquad(n\ge1),
\label{eq:K_flat_app}
\end{equation}
and, for constants \(C_0,R_0\),
\begin{equation}
\|K^{(n)}\|_\infty\le C_0R_0^nM_n,
\qquad
\|K^{(n)}\|_1\le C_0R_0^nM_n
\qquad(n\ge1).
\label{eq:weighted_deriv_assumption}
\end{equation}
Let \(U_A(z)\) solve
\begin{equation}
i\frac{d}{dz}U_A(z)=A K(z)U_A(z),
\qquad U_A(0)=\id .
\end{equation}
Then there exist constants \(C,c>0\), depending only on \(C_0,R_0,q\), such that for all \(A\ge1\),
\begin{equation}
\|(\id-P(Z))U_A(Z)P(0)\|
\le
C\exp\!\left[-c\frac{A}{[\log(e+A)]^q}\right].
\label{eq:weighted_global_bound}
\end{equation}
\end{theorem}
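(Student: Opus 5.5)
We follow the standard superadiabatic (Nenciu / Hagedorn--Joye) construction, truncated at an optimal order \(N\) and estimated in the weighted \(L^\infty/L^1\) calculus of Lemma~\ref{lem:weighted_product_class}.

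\textbf{Step 1: projector hierarchy.} Set \(\epsilon=1/A\). We construct a formal series \(P_\epsilon=\sum_{k\ge0}\epsilon^kE_k\) with \(E_0=P\). The \(E_k\) are defined recursively so that \(P_\epsilon^{(N)}=\sum_{k\le N}\epsilon^kE_k\) satisfies \(P_\epsilon^2=P_\epsilon\) and
\[
i\epsilon\,\dot P_\epsilon-[K,P_\epsilon]=0
\]
up to order \(\epsilon^{N+1}\). Concretely, the off-diagonal part of \(E_{k+1}\) solves \([K,X]=i\dot E_k\) (off-diagonal). The diagonal part is fixed by the idempotency condition, namely a polynomial in \(E_1,\dots,E_k\).

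For two levels with gap \(\ge2\), the inverse of \(\mathrm{ad}_K\) on off-diagonal matrices is bounded by \(1/2\). It is given by the reduced resolvent, which is a smooth function of \(K\). Since \(K\) has all derivatives of order \(\ge1\) controlled by \(C_0R_0^nM_n\), both in \(L^\infty\) and \(L^1\), we need two facts. First, \(P\) and this reduced resolvent are of weighted order \(0\). Second, \(\dot P\) is of weighted order \(1\). Both follow from Riesz-integral representations, the Faà di Bruno bound, and Lemma~\ref{lem:shifted_product}, which keeps the \(M_n\) class stable under composition. Inductively, using Lemma~\ref{lem:weighted_product_class}, \(E_k\) then has weighted order \(k\), with bounds of the form \(\|E_k^{(r)}\|\le B\,(CR)^{k+r}M_{k+r}\) in both norms.

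\textbf{Step 2: endpoint flatness.} By \eqref{eq:K_flat_app}, every \(E_k\) with \(k\ge1\) is a polynomial expression containing at least one derivative of \(K\). Hence \(E_k(0)=E_k(Z)=0\), so \(P_\epsilon^{(N)}\) coincides with \(P\) at both endpoints.

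\textbf{Step 3: residual estimate.} Let \(W\) be the unitary intertwining the range of \(P_\epsilon^{(N)}\), for example via the Kato/Nagy construction with generator \([\dot P_\epsilon^{(N)},P_\epsilon^{(N)}]\), which is valid while \(\|P_\epsilon^{(N)}-P\|<1\). The standard Duhamel comparison of \(U_A\) with the superadiabatic evolution gives
\[
\|Q(Z)U_A(Z)P(0)\|\le A\int_0^Z\|R_N(z)\|\,dz,
\]
where the residual \(R_N\) is \(\epsilon^{N+1}\) times combinations of \(\dot E_N\) and products of the \(E_k\). By the \(L^1\) half of the weighted calculus, this yields a bound \(C(\epsilon R)^{N}M_{N+1}\) that is independent of \(Z\). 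This \(Z\)-independence is the point of the \(L^1\) bookkeeping. The same \(L^\infty\) bounds guarantee that the idempotent correction is small once \(\epsilon N[\log N]^q\) is small.

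\textbf{Step 4: optimization.} Using \eqref{eq:MNplusone_bound}, the error is at most
\[
C\,\bigl(C'N[\log(N+e)]^q/A\bigr)^N .
\]
We choose \(N\asymp A/(e\,C'[\log(e+A)]^q)\), which makes the base at most \(e^{-1}\). This gives \(C\exp[-cA/\log^q(e+A)]\). For small \(A\), where \(N\) would be \(<1\), the bound is trivial after enlarging \(C\).

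\textbf{Main obstacle.} The hardest step will be Step 1: proving that the recursion preserves the weighted \(M_n\) class with only geometric growth \((CR)^{k+r}\) and no \(k!\)-type overcounting. The diagonal (idempotency) corrections generate multinomial sums, and these must be absorbed through the multishifted estimate \eqref{eq:multishifted_product}. We would attack this with a majorant-series or generating-function induction, in the style of Nenciu, on the combined index \(k+r\).
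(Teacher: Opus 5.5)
Your outline is correct and follows the paper's architecture, but it differs from the paper in two places, one of which needs a repair. The shared parts are:
\begin{itemize}
\item the two-part recursion, with the off-diagonal part from the homological equation and the diagonal part from idempotency;
\item the claim that the $k$-th coefficient has weighted order $k$ in the $L^\infty/L^1$ calculus of Lemma~\ref{lem:weighted_product_class};
\item the endpoint argument: every positive-order coefficient vanishes where all positive derivatives of $K$ vanish, so the truncated series equals $P$ at $z=0,Z$;
\item the truncation $N\asymp A/[\log(e+A)]^q$ via \eqref{eq:MNplusone_bound}.
\end{itemize}

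\textbf{Regularity of $P$ and $\operatorname{ad}_K^{-1}$.} The paper uses no Riesz integrals or Fa\`a di Bruno. It differentiates $P^2=P$, $[K,P]=0$ and the homological equation, and inverts $\operatorname{ad}_K$ (norm at most $1/2$) on off-diagonal parts. Only positive derivatives of $K$ ever appear. Your route also works here.

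\textbf{Step~3 as written.} $P^{(N)}_\epsilon$ is only idempotent up to $O(\epsilon^{N+1})$. It is not a projection, so Kato's construction with generator $[\dot P^{(N)}_\epsilon,P^{(N)}_\epsilon]$ does not directly produce an intertwiner. You must first pass to its Riesz projection $\widetilde P$. It exists by the smallness you note and equals $P$ at the endpoints. Its invariance defect $i\epsilon\dot{\widetilde P}-[K,\widetilde P]$ is still $\epsilon^{N+1}$ times a term with a $Z$-independent $L^1$ bound, because the idempotency defect has weighted order at least $N+1$. Only then can you run Duhamel against the Kato evolution of $\widetilde P$.

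\textbf{The paper's final step.} For any operator family $\Pi$, one has $\frac{d}{dz}(U_A^\dagger\Pi U_A)=U_A^\dagger(\Pi'+iA[K,\Pi])U_A$. The recursion gives the exact telescoping identity $(\Pi^{[N]})'+iA[K,\Pi^{[N]}]=A^{-N}\Pi_N'$. Integrating over $[0,Z]$ and using $\Pi^{[N]}=P$ at both endpoints gives $\|(\id-P(Z))U_A(Z)P(0)\|\le\|A^{-N}\Pi_N'\|_1$ directly. This needs no idempotency, no smallness condition, and has no product terms in the residual. The diagonal corrections serve only to make $F_n=\Pi_{n-1}'$ off-diagonal, so that the homological equation is solvable.

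\textbf{What each route buys.} Your intertwiner route gives an explicit comparison dynamics, in the classical Nenciu/Hagedorn--Joye picture. That also controls the accumulated phase, not just populations, which is more than the leakage bound needs. The paper's route is shorter and avoids the projection-correction step entirely.

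\textbf{Your flagged obstacle.} For the step you flag, the paper does no more than you propose. It runs the same induction with Lemma~\ref{lem:weighted_product_class} and the weighted homological inverse, and absorbs the Leibniz and convolution sums into the exponential constants.
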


\begin{proof}
For an operator \(X\), write
\[
X^{\rm D}=PXP+QXQ,\qquad
X^{\rm OD}=PXQ+QXP.
\]
On off-diagonal operators, \(\operatorname{ad}_K:X\mapsto[K,X]\) is invertible with norm at most \(1/2\): if \(K=E_gP+E_eQ\) and \(E_e-E_g\ge2\), then
\[
[K,PXQ]=-(E_e-E_g)PXQ,\qquad
[K,QXP]=(E_e-E_g)QXP.
\]

\paragraph{Bounds for \(P\).}
We first show that \(P\) has weighted order \(0\):
\begin{equation}
\|P^{(n)}\|_\infty+\|P^{(n)}\|_1
\le
CR^nM_n,\qquad n\ge1,
\label{eq:P_weighted_bounds}
\end{equation}
and \(\|P\|_\infty=1\).  Differentiating \(P^2=P\) and \([K,P]=0\) \(n\) times gives
\begin{align}
PP^{(n)}+P^{(n)}P-P^{(n)}
&=
-\sum_{k=1}^{n-1}\binom nk P^{(k)}P^{(n-k)},
\label{eq:Pn_diag_weighted}
\\
[K,P^{(n)}]
&=
-\sum_{k=1}^{n}\binom nk[K^{(k)},P^{(n-k)}].
\label{eq:Pn_offdiag_weighted}
\end{align}
The left side of \eqref{eq:Pn_diag_weighted} determines the diagonal part of \(P^{(n)}\), while \eqref{eq:Pn_offdiag_weighted} and the gap determine the off-diagonal part.  Induction on \(n\), using Lemma~\ref{lem:weighted_product_class}, proves \eqref{eq:P_weighted_bounds}; in every \(L^1\) estimate, at least one factor has positive weighted index, so \(\|P\|_1\) is never used.

The same equations imply endpoint flatness.  If \(P^{(k)}=0\) at an endpoint for \(1\le k<n\), then \eqref{eq:K_flat_app} forces both the diagonal and off-diagonal parts of \(P^{(n)}\) to vanish there.  Hence
\begin{equation}
P^{(n)}(0)=P^{(n)}(Z)=0,\qquad n\ge1.
\label{eq:P_endpoint_flat}
\end{equation}

\paragraph{The homological inverse.}
Let \(Y=Y^{\rm OD}\) have weighted order \(\mu\ge1\), and let \(X=\mathfrak S(Y)\) be the unique solution of
\begin{equation}
i[K,X]=-Y,\qquad X^{\rm D}=0.
\label{eq:homological_equation}
\end{equation}
Then \(X\) has the same weighted order:
\begin{equation}
\|X^{(r)}\|_\infty+\|X^{(r)}\|_1
\le
CR^rM_{\mu+r},
\qquad r\ge0.
\label{eq:hom_weighted_bounds}
\end{equation}
For \(r=0\), this is the inverse-gap estimate.  For \(r\ge1\), differentiating \eqref{eq:homological_equation} gives
\[
i[K,X^{(r)}]
=
-Y^{(r)}
-i\sum_{\ell=1}^{r}\binom r\ell [K^{(\ell)},X^{(r-\ell)}],
\]
which controls the off-diagonal part of \(X^{(r)}\).  The diagonal part is controlled by differentiating \(PXP+QXQ=0\).  All remaining terms contain either a positive derivative of \(P\), a positive derivative of \(K\), \(Y\), or a lower derivative of \(X\), so Lemma~\ref{lem:weighted_product_class} closes the induction without any factor of \(Z\).  If all derivatives of \(Y\) vanish at the endpoints, then the same differentiated equations imply that all derivatives of \(X\) vanish at the endpoints.

\paragraph{Superadiabatic coefficients.}
Set \(\Pi_0=P\) and define
\[
D_A(X)=A^{-1}X'+i[K,X].
\]
We construct \(\Pi_n\) recursively.  Suppose \(\Pi_0,\ldots,\Pi_{n-1}\) have been chosen so that, with
\[
\Sigma_{n-1}=\sum_{j=0}^{n-1}A^{-j}\Pi_j,
\]
the projection defect \(\Sigma_{n-1}^2-\Sigma_{n-1}\) and invariance defect \(D_A(\Sigma_{n-1})\) vanish through order \(A^{-(n-1)}\).  Let \(G_n\) and \(F_n\) be the leading coefficients at order \(A^{-n}\):
\begin{equation}
G_n=\sum_{a=1}^{n-1}\Pi_a\Pi_{n-a},
\qquad
F_n=\Pi_{n-1}',
\label{eq:G_F_explicit}
\end{equation}
with the empty sum \(G_1=0\).

The coefficient \(G_n\) is diagonal.  Indeed, \(\Sigma_{n-1}^2-\Sigma_{n-1}\) commutes with \(\Sigma_{n-1}\), and the coefficient of \(A^{-n}\) in this commutator is \([P,G_n]\).  The coefficient \(F_n\) is off-diagonal.  Since \(D_A\) is a derivation,
\[
D_A(\Sigma_{n-1}^2-\Sigma_{n-1})
=
D_A(\Sigma_{n-1})\Sigma_{n-1}
+\Sigma_{n-1}D_A(\Sigma_{n-1})
-D_A(\Sigma_{n-1}).
\]
Taking the coefficient of \(A^{-n}\) gives
\[
i[K,G_n]=F_nP+PF_n-F_n.
\]
Because \(G_n\) is diagonal, it commutes with \(K\), and therefore \(F_nP+PF_n-F_n=0\), which is equivalent to \(F_n=F_n^{\rm OD}\).

Define
\begin{equation}
D_n=-PG_nP+QG_nQ,\qquad
O_n=\mathfrak S(F_n),\qquad
\Pi_n=D_n+O_n .
\label{eq:Pi_n_def}
\end{equation}
Then
\[
P\Pi_n+\Pi_nP-\Pi_n=-G_n,
\qquad
F_n+i[K,\Pi_n]=0,
\]
so both defects are cancelled through order \(A^{-n}\).

We now record the bounds.  The induction gives
\begin{equation}
\|\Pi_j^{(r)}\|_\infty+\|\Pi_j^{(r)}\|_1
\le
C^{j+r+1}R^{j+r}M_{j+r},
\qquad j+r\ge1.
\label{eq:Pij_weighted_bounds}
\end{equation}
For \(j=0\), this is \eqref{eq:P_weighted_bounds}.  If the estimate holds up to \(j=n-1\), then \(G_n\) has weighted order \(n\) by Lemma~\ref{lem:weighted_product_class}, \(F_n=\Pi_{n-1}'\) has weighted order \(n\), \(D_n\) has weighted order \(n\), and the homological inverse gives the same order for \(O_n\).  Finite sums and the number of Leibniz terms are absorbed into the exponential constants.

Endpoint flatness is inherited by the same recursion.  For \(j=0\), it is \eqref{eq:P_endpoint_flat}.  For \(j\ge1\), the terms defining \(G_j\) contain lower positive-order coefficients, and \(F_j=\Pi_{j-1}'\).  By induction, these are flat at the endpoints; hence so are \(D_j\), \(O_j\), and \(\Pi_j\).  Therefore
\begin{equation}
\Pi_j^{(r)}(0)=\Pi_j^{(r)}(Z)=0
\qquad(j\ge1,\ r\ge0).
\label{eq:Pi_endpoint_flat}
\end{equation}

For \(N\ge1\), set
\[
\Pi^{[N]}=\sum_{j=0}^{N}A^{-j}\Pi_j .
\]
By \eqref{eq:Pi_endpoint_flat},
\begin{equation}
\Pi^{[N]}(0)=P(0),\qquad \Pi^{[N]}(Z)=P(Z).
\label{eq:Pi_endpoint_values}
\end{equation}
The construction also gives the exact residual identity
\begin{equation}
\frac{d}{dz}\Pi^{[N]}+iA[K,\Pi^{[N]}]
=
A^{-N}\Pi_N'.
\label{eq:exact_residual_identity}
\end{equation}
Thus, by \eqref{eq:Pij_weighted_bounds} and \eqref{eq:MNplusone_bound},
\begin{align}
\|A^{-N}\Pi_N'\|_1
&\le
A^{-N}C^{N+2}R^{N+1}M_{N+1}
\nonumber\\
&\le
C\left(\frac{C_1N[\log(N+e)]^q}{A}\right)^N .
\label{eq:RN_bound}
\end{align}

Finally,
\[
\frac{d}{dz}\left(U_A(z)^\dagger\Pi^{[N]}(z)U_A(z)\right)
=
U_A(z)^\dagger
\left(A^{-N}\Pi_N'(z)\right)
U_A(z).
\]
Using \eqref{eq:Pi_endpoint_values} and integrating,
\[
\|P(Z)U_A(Z)-U_A(Z)P(0)\|
\le
\|A^{-N}\Pi_N'\|_1.
\]
Multiplying by \(\id-P(Z)\) on the left and \(P(0)\) on the right gives the same bound for the leakage.

Choose
\[
N=\left\lfloor a\frac{A}{[\log(e+A)]^q}\right\rfloor
\]
with \(a>0\) small enough.  For \(A\) larger than a \(q\)-dependent constant, \eqref{eq:RN_bound} is bounded by
\[
C\exp\!\left[-c\frac{A}{[\log(e+A)]^q}\right].
\]
For bounded \(A\), the leakage norm is at most one, and increasing \(C\) extends the estimate to every \(A\ge1\).
\end{proof}

\section{Proof of Theorem~\ref{thm:main_leakage}: resolved two-level leakage}
\manualref{A.4}\label{app:proof1}\label{app:proof-thm1}

We now apply Theorem~\ref{thm:weighted_global} to the two-level comparator
\[
h_\Delta(s)=-(1-s)X+s\Delta Z
\]
under the schedule \eqref{eq:single_schedule}.  Fix a resolved detuning \(1\ge|\Delta|>\nu\), and write
\[
d=|\Delta|,\qquad \sigma=\operatorname{sgn}(\Delta),\qquad
r_d(s)=\sqrt{(1-s)^2+d^2s^2}.
\]
Define the rescaled gap time
\begin{equation}
z(t)=\frac1A\int_0^t r_d(s(t'))\,dt'.
\label{eq:z_def}
\end{equation}
Then the two-level Schrödinger equation is
\begin{equation}
i\frac{d}{dz}U(z)=AK_\Delta(z)U(z),
\qquad
K_\Delta(z)=\frac{h_\Delta(s(t(z)))}{r_d(s(t(z)))}.
\label{eq:K_delta_def}
\end{equation}
The normalized Hamiltonian \(K_\Delta\) has eigenvalues \(\pm1\), hence gap \(2\), and has the same instantaneous projectors as \(h_\Delta\).

Introduce \(\phi\in[0,\pi/2]\) by
\begin{equation}
\cos\phi=\frac{1-s}{r_d(s)},
\qquad
\sin\phi=\frac{ds}{r_d(s)}.
\label{eq:phi_def_app}
\end{equation}
Then
\begin{equation}
K_\Delta=-\cos\phi\,X+\sigma\sin\phi\,Z\equiv G_\sigma(\phi).
\label{eq:K_phi_app}
\end{equation}
Let
\[
x=\frac{t}{T_{\rm stage}(A)},\qquad f(x)=F_q'(x).
\]
Using \eqref{eq:single_velocity}, \eqref{eq:single_stage_time}, and \eqref{eq:z_def}, one obtains
\begin{equation}
\frac{d}{dz}=\mathcal L,
\qquad
\mathcal L=a_d(\phi)\partial_x+b_{d,\nu}(\phi)f(x)\partial_\phi,
\label{eq:L_operator_app}
\end{equation}
where
\begin{align}
a_d(\phi)
&=
\frac{8}{\pi}\left(\nu\cos\phi+\frac{\nu}{d}\sin\phi\right),
\label{eq:a_phi_app}
\\
b_{d,\nu}(\phi)
&=
4(d\cos\phi+\sin\phi)
\left(\cos^2\phi+\frac{\nu^2}{d^2}\sin^2\phi\right).
\label{eq:b_phi_app}
\end{align}
The inequalities \(d\in[\nu,1]\) and \(\nu\in(0,1/2]\) imply uniform analytic bounds
\begin{equation}
\|\partial_\phi^r a_d\|_\infty+
\|\partial_\phi^r b_{d,\nu}\|_\infty+
\|\partial_\phi^r G_\sigma\|_\infty
\le
CR^r r!
\le
CR^rM_r.
\label{eq:analytic_coeff_bounds}
\end{equation}
Also \(a_d(\phi)>0\).

\begin{lemma}[Uniform derivative bounds for the normalized comparator]
\label{lem:K_weighted_bounds}
There are constants \(C,R>0\), depending only on \(q\) and on the fixed clock \(F_q\), such that for every \(\nu\in(0,1/2]\), every resolved \(d\in[\nu,1]\), and every \(n\ge1\),
\begin{equation}
\|K_\Delta^{(n)}\|_\infty\le CR^nM_n,
\qquad
\|K_\Delta^{(n)}\|_1\le CR^nM_n,
\label{eq:K_deriv_bounds_app}
\end{equation}
where derivatives and the \(L^1\) norm are taken with respect to \(z\).  Moreover,
\begin{equation}
K_\Delta^{(n)}(0)=K_\Delta^{(n)}(Z_\Delta)=0,
\qquad n\ge1,
\label{eq:K_endpoint_flat_app}
\end{equation}
where \(Z_\Delta\) is the final value of \(z(t)\).
\end{lemma}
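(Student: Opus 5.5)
The plan is to view \(K_\Delta(z)=G_\sigma(\phi(z))\) as a composition and to control all \(z\)-derivatives through the first-order operator \(\mathcal L=a_d(\phi)\partial_x+b_{d,\nu}(\phi)f(x)\partial_\phi\) from \eqref{eq:L_operator_app}. Every \(z\)-derivative of a function \(g(x,\phi)\) along the trajectory equals \(\mathcal L g\). Hence \(K_\Delta^{(n)}=\mathcal L^{n}G_\sigma\), and \(\mathcal L^n G_\sigma\) is a finite sum of monomials of the form
\[
\Bigl(\prod_i \partial_\phi^{\alpha_i}a_d\Bigr)\Bigl(\prod_k \partial_\phi^{\beta_k}b_{d,\nu}\Bigr)\Bigl(\prod_\ell f^{(\gamma_\ell)}(x)\Bigr)\,\partial_\phi^{p}G_\sigma .
\]
Since \(G_\sigma\) depends only on \(\phi\), the first application of \(\mathcal L\) gives \(b f\,\partial_\phi G_\sigma\). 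So every monomial of positive order contains at least one factor \(f^{(\gamma)}(x)\).

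\textbf{Step 1: \(L^\infty\) bounds.} I bound \(\mathcal L^n G_\sigma\) by a majorant (Cauchy–Kovalevskaya style) argument. I assign to each function \(g\) the formal series \(\sum_r \|\partial^r g\|\,t^r/M_r\). The coefficient bounds \eqref{eq:analytic_coeff_bounds} (analytic, so \(M_r\)-weights with \(r!\le M_r\)) and \eqref{eq:f_derivative_app} for \(f\) supply the input. Product stability \eqref{eq:multishifted_product} then gives, by induction on \(n\), the joint estimate
\[
\|\partial_x^{j}\partial_\phi^{k}\mathcal L^{n}g\|_\infty\le CR^{n+j+k}M_{n+j+k}.
\]
The structure to track is that each \(\mathcal L\) raises the total weighted order by one. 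The factorial growth is only of \(M\)-type because the number of Leibniz terms is absorbed by Lemma~\ref{lem:shifted_product}, exactly as in the proof of Lemma~\ref{lem:weighted_product_class}. Uniformity in \(d,\nu\) comes from \eqref{eq:analytic_coeff_bounds} alone, since \(\nu/d\le1\).

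\textbf{Step 2: \(L^1\) bounds.} This is the main obstacle, because the \(z\)-interval has length \(Z_\Delta\sim\nu^{-1}\) on the plateau-free middle. I use that every monomial carries a factor \(f^{(\gamma)}(x)\). I bound that factor's \(z\)-integral and put \(L^\infty\) on everything else. Since \(dx/dz=a_d(\phi)\), we get \(\int|f^{(\gamma)}(x(z))|\,dz=\int|f^{(\gamma)}(x)|\,dx/a_d\). However \(a_d\) can be as small as \(\sim\nu\), so this naive bound fails.

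Instead I change variables in \(\phi\), using \(d\phi/dz=b f\ge0\). This means \(\phi\) is monotone from \(0\) to \(\pi/2\). For the specific monomial with exactly one factor \(f\) and no derivative on it, \(\int |b f\,\partial_\phi^p G|\,dz\le\int_0^{\pi/2}|\partial_\phi^pG|\,d\phi\) is bounded. For general monomials with a factor \(f^{(\gamma)}\), I first factor \(f^{(\gamma)}=(f^{(\gamma)}/f)\cdot f\) where \(f>0\). If that ratio is problematic near the plateau edges, I instead treat the first-applied \(\mathcal L\) specially: write \(K_\Delta^{(n)}=\mathcal L^{n-1}(b f\,\partial_\phi G_\sigma)\) and keep one undifferentiated factor \(f\cdot b\) (i.e. \(d\phi/dz\)) in every term by commuting derivatives through. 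The defect terms, where \(f\) does get differentiated, contain \(f'\). I control those via \(\partial_x\)-derivatives integrated against \(dx\), paired with a \(\partial_x\) already present: the \(a_d\partial_x\) part of \(\mathcal L\) contributes \(a_d\,dz=dx\).

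So the key bookkeeping is that each monomial contains either a factor \(b f\) (integrate in \(\phi\)) or a factor \(a_d\) multiplying an \(x\)-derivative (integrate in \(x\)). Either way the \(z\)-integral is bounded by \(O(1)\) times \(L^\infty\) norms, independent of \(\nu\). I expect verifying that this dichotomy holds in every term, without ratios \(f^{(\gamma)}/f\), to be the delicate part of the proof.

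\textbf{Step 3: endpoint flatness.} Near \(x=0\) and \(x=1\), \(f\equiv0\) on the plateaus of Lemma~\ref{lem:dc_switch}, so \(\phi\) is constant there. Hence \(K_\Delta\) is constant in a neighbourhood of both endpoints in \(z\), which gives \eqref{eq:K_endpoint_flat_app} directly.
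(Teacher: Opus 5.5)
Your Steps 1 and 3 match the paper's argument: mixed $L^\infty$ bounds on $W_n=\mathcal L^nG_\sigma$ in both $x$ and $\phi$ via a Leibniz induction with product stability, and endpoint flatness from the plateaus of $F_q$. The gap is in Step 2. The dichotomy you end with is the right one, but you do not prove it, and the bookkeeping you sketch anchors on the wrong factor.

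The factor contributed by any application of $\mathcal L$ other than the last can be differentiated by later applications, since $\partial_\phi$ hits $b_{d,\nu}$ and $a_d$, not only $f$. For example, $\mathcal L^3G_\sigma$ contains the term $bf\cdot a_d'\,b f'\,\partial_\phi G_\sigma$. There the $a_d$ that produced $f'$ has itself been differentiated, so your pairing of $f'$ with ``the $a_d$ already present'' is unavailable. A differentiated factor cannot serve as the $L^1$ factor, because it no longer cancels the measure. Since $d\phi/dz=bf$, for $d=1$ one gets
\[
\int|b'|\,f\,dz=\int_0^{\pi/2}\frac{|b'|}{b}\,d\phi\ge\log\frac{b(0)}{b(\pi/2)}=2\log\frac1\nu ,
\]
because $b_{1,\nu}(0)=4$ and $b_{1,\nu}(\pi/2)=4\nu^2$. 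This is not uniform in $\nu$. The ratio device $f^{(\gamma)}/f$ is also unavailable, since $f$ vanishes to infinite order at the plateau edges.

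The missing idea is to peel off the \emph{last} (outermost) $\mathcal L$ instead of the first. Writing
\[
K_\Delta^{(n)}=\mathcal L W_{n-1}=a_d\,\partial_xW_{n-1}+b_{d,\nu}f\,\partial_\phi W_{n-1}
\]
shows at once that every term carries an undifferentiated $a_d$ or $bf$, with no monomial expansion needed. Equivalently, along the trajectory $\mathcal L=a_d(\phi(x))\,\frac{d}{dx}$ and $dz=dx/a_d$, so the $a_d$ cancels and
\[
\|K_\Delta^{(n)}\|_1=\int_0^1\Bigl\|\frac{d}{dx}W_{n-1}(x,\phi(x))\Bigr\|\,dx\le\|\partial_xW_{n-1}\|_\infty+\frac{\pi}{2}\,\|\partial_\phi W_{n-1}\|_\infty .
\]
This uses $\phi'\ge0$ and $\int_0^1\phi'\,dx=\pi/2$. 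The right-hand side is controlled by your Step 1 bound at $(n-1,j,k)=(n-1,1,0)$ and $(n-1,0,1)$. This is exactly how the paper proves the $L^1$ estimate. It makes the special treatment of the first $\mathcal L$, the defect terms, and the ratios all unnecessary.
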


\begin{proof}
Write
\[
f(x)=F_q'(x),
\qquad
\mathcal L
=
a_d(\phi)\partial_x+b_{d,\nu}(\phi)f(x)\partial_\phi,
\]
and define
\[
W_n=\mathcal L^nG_\sigma
\]
as a function of the independent variables
\((x,\phi)\in[0,1]\times[0,\pi/2]\).  Since
\(\mathcal L\) is the \(z\)-derivative along the trajectory, we have
\[
K_\Delta^{(n)}(z)=W_n(x(z),\phi(z)).
\]

We first prove the mixed derivative estimate
\begin{equation}
\|\partial_x^r\partial_\phi^s W_n\|_\infty
\le
C^{n+r+s+1}R^{n+r+s}M_{n+r+s}
\label{eq:W_mixed_bound}
\end{equation}
for all \(n,r,s\ge0\), uniformly in
\(\nu\in(0,1/2]\), \(d\in[\nu,1]\), and
\(\sigma\in\{\pm1\}\).  Here the supremum is taken over
\((x,\phi)\in[0,1]\times[0,\pi/2]\).

The case \(n=0\) follows from the analyticity of
\[
G_\sigma(\phi)=-\cos\phi\,X+\sigma\sin\phi\,Z .
\]
Moreover, since \(0<\nu/d\le1\) and \(0<d\le1\), the coefficients
\(a_d\), \(b_{d,\nu}\), and \(G_\sigma\) satisfy uniform analytic bounds:
for suitable constants \(C_a,R_a\),
\begin{equation}
\|\partial_\phi^\ell a_d\|_\infty
+
\|\partial_\phi^\ell b_{d,\nu}\|_\infty
+
\|\partial_\phi^\ell G_\sigma\|_\infty
\le
C_aR_a^\ell \ell!
\le
C_aR_a^\ell M_\ell .
\label{eq:analytic_coeff_bounds_local}
\end{equation}
The clock derivative satisfies
\[
\|f^{(\ell)}\|_\infty\le C_fR_f^\ell M_\ell .
\]

Assume \eqref{eq:W_mixed_bound} holds for \(W_n\).  Since
\[
W_{n+1}
=
a_d(\phi)\partial_xW_n
+
b_{d,\nu}(\phi)f(x)\partial_\phi W_n,
\]
Leibniz's rule gives two types of terms.  For the first term,
\[
\partial_x^r\partial_\phi^s
\bigl(a_d(\phi)\partial_xW_n\bigr)
=
\sum_{\ell=0}^s
\binom{s}{\ell}
(\partial_\phi^\ell a_d)
\,
\partial_x^{r+1}\partial_\phi^{s-\ell}W_n .
\]
Using \eqref{eq:analytic_coeff_bounds_local}, the induction hypothesis,
and Lemma~\ref{lem:shifted_product}, this is bounded by
\[
C^{n+r+s+2}R^{n+r+s+1}M_{n+r+s+1}.
\]
For the second term,
\[
\partial_x^r\partial_\phi^s
\bigl(b_{d,\nu}(\phi)f(x)\partial_\phi W_n\bigr)
=
\sum_{\alpha=0}^r\sum_{\beta=0}^s
\binom r\alpha\binom s\beta
f^{(\alpha)}(x)
(\partial_\phi^\beta b_{d,\nu})(\phi)
\partial_x^{r-\alpha}
\partial_\phi^{s-\beta+1}W_n .
\]
The derivative bounds for \(f\), the analytic bounds for \(b_{d,\nu}\),
the induction hypothesis, and Lemma~\ref{lem:shifted_product} again give
\[
C^{n+r+s+2}R^{n+r+s+1}M_{n+r+s+1}.
\]
After increasing \(C\) and \(R\), the induction closes.  Taking
\(r=s=0\) in \eqref{eq:W_mixed_bound} gives
\[
\|K_\Delta^{(n)}\|_\infty
\le
CR^nM_n .
\]

We now prove the \(L^1\) estimate.  Parameterize the trajectory by
\(x\mapsto\phi(x)\).  Since \(F_q\) is monotone, \(s(x)\) is monotone,
and hence \(\phi(x)\) is monotone increasing from \(0\) to \(\pi/2\).  Thus
\begin{equation}
\int_0^1\phi'(x)\,dx=\frac{\pi}{2}.
\label{eq:phi_total_variation}
\end{equation}
Along the trajectory,
\[
\mathcal L
=
a_d(\phi(x))
\bigl(\partial_x+\phi'(x)\partial_\phi\bigr)
=
a_d(\phi(x))\frac{d}{dx},
\qquad
dz=\frac{dx}{a_d(\phi(x))}.
\]
Therefore, for \(n\ge1\),
\begin{align}
\|K_\Delta^{(n)}\|_1
&=
\int_0^{Z_\Delta}
\|W_n(x(z),\phi(z))\|\,dz
\nonumber\\
&=
\int_0^1
\left\|
a_d(\phi(x))
\frac{d}{dx}W_{n-1}(x,\phi(x))
\right\|
\frac{dx}{a_d(\phi(x))}
\nonumber\\
&=
\int_0^1
\left\|
\frac{d}{dx}W_{n-1}(x,\phi(x))
\right\|dx
\nonumber\\
&\le
\int_0^1
\|\partial_xW_{n-1}(x,\phi(x))\|\,dx
+
\int_0^1
\phi'(x)
\|\partial_\phi W_{n-1}(x,\phi(x))\|\,dx
\nonumber\\
&\le
\|\partial_xW_{n-1}\|_\infty
+
\frac{\pi}{2}\|\partial_\phi W_{n-1}\|_\infty
\nonumber\\
&\le
CR^nM_n .
\end{align}
In the last step we used \eqref{eq:W_mixed_bound} with
\((n-1,r,s)=(n-1,1,0)\) and \((n-1,0,1)\).  This estimate is uniform in
\(\nu\) and \(d\); in particular, the length \(Z_\Delta\) of the
\(z\)-interval never appears.

Finally, \(F_q\) is constant near \(x=0\) and near \(x=1\).  Hence
\(s(x)\), \(\phi(x)\), and therefore \(K_\Delta(z)\), are constant on
neighborhoods of both endpoints.  All positive \(z\)-derivatives vanish
there, so
\[
K_\Delta^{(n)}(0)=K_\Delta^{(n)}(Z_\Delta)=0,
\qquad n\ge1 .
\]
\end{proof}

\begin{proof}[Proof of Theorem~\ref{thm:main_leakage}]
Lemma~\ref{lem:K_weighted_bounds} verifies the hypotheses of Theorem~\ref{thm:weighted_global} for \(K_\Delta(z)\), uniformly over all \(\nu\in(0,1/2]\) and all resolved detunings \(1\ge|\Delta|>\nu\).  Since \(K_\Delta\) is obtained from \(h_\Delta\) by division by the positive scalar \(r_d(s(t(z)))\), their instantaneous spectral projectors agree.  Reparameterizing time does not change the physical unitary.  Therefore
\[
\bigl\|
(\id-P_\Delta(1))\mathcal V_{\Delta,A}^{\rm stage}P_\Delta(0)
\bigr\|
\le
C\exp\!\left[-c\frac{A}{[\log(e+A)]^q}\right],
\]
which is \eqref{eq:main_leakage_bound}.
\end{proof}

\section{Choosing the adiabatic parameter}
\manualref{A.5}\label{app:choose-A}

\begin{lemma}[Inverting the near-exponential bound]
\label{lem:invert_bound}
Fix \(q>1\) and \(c>0\).  There is a constant \(C_q\) such that, for every \(L\ge1\), the choice
\begin{equation}
A=C_qL[\log(e+L)]^q
\label{eq:A_inversion_app}
\end{equation}
implies
\begin{equation}
\exp\!\left[-c\frac{A}{[\log(e+A)]^q}\right]\le e^{-L}.
\label{eq:invert_result_app}
\end{equation}
\end{lemma}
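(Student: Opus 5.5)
The claim is equivalent to $A/[\log(e+A)]^q \ge L/c$. I will plug in the proposed $A$ and bound the logarithm in the denominator from above by a constant multiple of $\log(e+L)$. The left side then becomes $C_q L[\log(e+L)]^q$ divided by at most $K^q[\log(e+L)]^q$, i.e.\ at least $C_qL/K^q$. Choosing $C_q\ge K^q/c$ finishes the proof.

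\textbf{Key step: comparing $\log(e+A)$ with $\log(e+L)$.} With $A=C_qL[\log(e+L)]^q$ and $C_q\ge1$, I use $\log(e+L)\le e+L$ to get
\[
e+A\le e+C_q(e+L)^{1+q}\le C_q'(e+L)^{1+q}.
\]
Hence
\[
\log(e+A)\le \log C_q'+(1+q)\log(e+L)\le \bigl(\log C_q'+1+q\bigr)\log(e+L),
\]
where the last step uses $\log(e+L)\ge1$. So $K=\log C_q'+1+q$ works, with $C_q'=e+C_q$.

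\textbf{The apparent circularity.} The constant $K$ depends on $C_q$ through $\log C_q$. The requirement $C_q\ge K^q/c$ therefore reads
\[
C_q\ge \frac{\bigl(\log(e+C_q)+1+q\bigr)^q}{c}.
\]
This is the only real obstacle, and it is mild: the right side grows only polylogarithmically in $C_q$. It therefore holds for all sufficiently large $C_q$, depending only on $q$ and $c$.

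\textbf{Conclusion.} For such $C_q$,
\[
\frac{A}{[\log(e+A)]^q}\ge \frac{C_qL}{K^q}\ge \frac{L}{c}.
\]
Exponentiating gives $\exp\!\left[-c\frac{A}{[\log(e+A)]^q}\right]\le e^{-L}$, which is \eqref{eq:invert_result_app}, uniformly for all $L\ge1$.
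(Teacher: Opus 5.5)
Your proposal is correct and follows essentially the same route as the paper: you bound $\log(e+A)$ by a factor growing only logarithmically in $C_q$ times $\log(e+L)$, so that $A/[\log(e+A)]^q\ge C_qL/K^q$, and then use $C_q/(\log C_q)^q\to\infty$ to choose $C_q$ large. Your explicit $K=\log(e+C_q)+1+q$ just makes concrete the paper's unspecified factor $B_q(1+\log\kappa)$.
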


\begin{proof}
For a parameter \(\kappa\ge1\), set \(A_\kappa=\kappa L[\log(e+L)]^q\).  For all \(L\ge1\),
\[
\log(e+A_\kappa)
\le
B_q(1+\log\kappa)\log(e+L)
\]
with a constant \(B_q\) independent of \(L\) and \(\kappa\).  Hence
\[
\frac{A_\kappa}{[\log(e+A_\kappa)]^q}
\ge
\frac{\kappa}{B_q^q(1+\log\kappa)^q}\,L .
\]
Since \(\kappa/(1+\log\kappa)^q\to\infty\), choose \(\kappa=C_q\) large enough that the prefactor is at least \(1/c\).
\end{proof}

Let \(C_*,c_*\) be the constants from Theorem~\ref{thm:main_leakage}.  Choose
\begin{equation}
L=1+\log\frac{C_0m}{\sqrt{\delta}},
\qquad
A=C_qL[\log(e+L)]^q,
\label{eq:A_choice_app}
\end{equation}
where \(C_q\) is chosen using Lemma~\ref{lem:invert_bound} with \(c=c_*\), and \(C_0\) is large enough to absorb the prefactor \(C_*\).  Then the resolved-branch leakage amplitude of one stage obeys
\begin{equation}
\beta_{\rm stage}
\le
\frac{\sqrt{\delta}}{m}.
\label{eq:stage_beta_app}
\end{equation}
The corresponding stage time is
\begin{equation}
T_{\rm stage}(A)
=
\frac{\pi A}{8\nu}
=
O\!\left(
\frac1\nu L[\log(e+L)]^q
\right).
\label{eq:T_stage_app_final}
\end{equation}

\section{Proof of Theorem~\ref{thm:correctness}: composition over QPE stages}
\manualref{A.6}\label{app:composition}\label{app:proof-thm2}

We prove the measurement-success guarantee first for one fixed eigenvalue \(\lambda\), and then sum over eigenspaces.

For a length-\(j\) prefix \(b\in\{0,1\}^j\), let
\[
\val_j(b)=\sum_{\ell=1}^j2^{-\ell}b_\ell,
\qquad
I_b=[\val_j(b),\val_j(b)+2^{-j}]
\]
be the associated dyadic interval.  We call \(b\) good for \(\lambda\) if
\begin{equation}
\dist(\lambda,I_b)\le\nu.
\label{eq:good_prefix_app}
\end{equation}
Let \(G_j^\lambda\) be the projector onto the span of good length-\(j\) prefixes, tensored with the identity on unused qubits.  The empty prefix is good, so \(G_0^\lambda=\id\).

Consider stage \(j\), conditioned on a good parent prefix \(b\in\{0,1\}^{j-1}\).  Its midpoint is
\[
\mu_j(b)=\val_{j-1}(b)+2^{-j}.
\]
If \(|\lambda-\mu_j(b)|\le\nu\), then both children are good, since the midpoint is an endpoint of both child intervals.  If \(|\lambda-\mu_j(b)|>\nu\), then the child on the same side of \(\mu_j(b)\) as \(\lambda\) is good.  The opposite child is exactly the excited final state of the two-level comparator \(h_\Delta\), with \(\Delta=\lambda-\mu_j(b)\), and therefore has transition amplitude at most \(\sqrt{\delta}/m\) by \eqref{eq:stage_beta_app}.

This blockwise statement gives an operator bound for the full stage.  Let \(U_j^\lambda\) be the stage-\(j\) unitary restricted to the eigenspace of \(H_S\) with eigenvalue \(\lambda\), and define
\begin{equation}
\Gamma_{j-1}^\lambda
=
G_{j-1}^\lambda\otimes\ket{+}\!\bra{+}_{a_j}\otimes\id_{a_{>j}}.
\end{equation}
The fresh qubit \(a_j\) has not appeared in earlier stages, so the good component before stage \(j\) lies in the range of \(\Gamma_{j-1}^\lambda\).  Since the prefix blocks are orthogonal and the unresolved blocks have both children good,
\begin{equation}
\|(\id-G_j^\lambda)U_j^\lambda\Gamma_{j-1}^\lambda\|
\le
\frac{\sqrt{\delta}}{m}.
\label{eq:stage_good_projector_bound}
\end{equation}

Let
\[
\psi_j^\lambda
=
U_j^\lambda\cdots U_1^\lambda\ket{+}^{\otimes m},
\qquad
e_j^\lambda
=
\|(\id-G_j^\lambda)\psi_j^\lambda\|.
\]
Decompose \(\psi_{j-1}^\lambda\) into its good and bad parts.  The good part is controlled by \eqref{eq:stage_good_projector_bound}; the bad part has norm \(e_{j-1}^\lambda\), and unitarity cannot increase it.  Hence
\[
e_j^\lambda
\le
e_{j-1}^\lambda+\frac{\sqrt{\delta}}{m}.
\]
Since \(e_0^\lambda=0\), iteration gives
\begin{equation}
e_m^\lambda\le\sqrt{\delta}.
\label{eq:em_bound_app}
\end{equation}

If a final prefix \(b\) is good, then \(\dist(\lambda,I_b)\le\nu\) and \(|I_b|=2^{-m}\).  Therefore the center estimate satisfies
\begin{equation}
|\widehat\lambda(b)-\lambda|
\le
\nu+\frac{2^{-m}}2
=
2^{-m}
\le
\eps,
\label{eq:center_estimate_success_app}
\end{equation}
using \(\nu=2^{-m-1}\) and \(m=\lceil\log_2(1/\eps)\rceil\).  Thus \(G_m^\lambda\) is supported on the valid-estimate projector \(G^\lambda\) from Eq.~\eqref{eq:fail}.

For a general input state, write
\[
H_S=\sum_\alpha\lambda_\alpha\Pi_\alpha,
\qquad
\ket{\psi}=\sum_\alpha\ket{\psi_\alpha},
\qquad
\ket{\psi_\alpha}=\Pi_\alpha\ket{\psi}.
\]
All stage Hamiltonians commute with \(H_S\), so the evolution is block diagonal in the eigenspaces of \(H_S\).  Consequently,
\begin{align}
\left\|
\Pi_{\rm fail}\,
\mathcal V_A
\left(
\ket{\psi}_{\mathcal S}\ket{+}_{\mathcal A}^{\otimes m}
\right)
\right\|^2
&=
\sum_\alpha
\|\ket{\psi_\alpha}\|^2
\left\|
(\id-G^{\lambda_\alpha})\psi_m^{\lambda_\alpha}
\right\|^2
\nonumber\\
&\le
\delta .
\end{align}
This proves the correctness claim of Theorem~\ref{thm:correctness}.

It remains to compute the total time.  By \eqref{eq:T_stage_app_final},
\[
T_{\rm tot}
=
mT_{\rm stage}
=
O\!\left(
\frac{m}{\nu}L[\log(e+L)]^q
\right),
\qquad
L=1+\log\frac{C_0m}{\sqrt{\delta}}.
\]
Since
\[
m=\Theta(\log(1/\eps)),
\qquad
\nu=\Theta(\eps),
\qquad
L=
O\!\left(
1+\log\frac1\delta+\log\log\frac e\eps
\right),
\]
this is exactly the runtime bound stated in Theorem~\ref{thm:correctness}.
\section{Application to local adiabatic Grover search}
\manualref{A.7}\label{app:grover}

We briefly indicate why the same argument gives the high-confidence Grover statement in the main text.  Let \(M\) marked items be known, set \(a=M/N\), and assume \(0<a\le1/2\).  The standard local adiabatic search Hamiltonian is
\begin{equation}
H_G(s)=(1-s)(\id-\ket{u}\!\bra{u})+s(\id-\Pi_M),
\label{eq:grover_H_app}
\end{equation}
where \(\ket{u}\) is the uniform state and \(\Pi_M\) projects onto the marked subspace \cite{RolandCerf2002}.  Starting from \(\ket{u}\), the evolution is exactly confined to the two-dimensional subspace spanned by
\begin{equation}
\ket{w}=\frac{\Pi_M\ket{u}}{\sqrt a},
\qquad
\ket{r}=\frac{(\id-\Pi_M)\ket{u}}{\sqrt{1-a}} .
\end{equation}
In this subspace, after subtracting the scalar \(\id/2\) and applying an \(s\)-independent rotation of Pauli axes, the traceless Hamiltonian is
\begin{equation}
\overline H_a(s)
=
-\frac12\left[
2\sqrt{1-a}\left(s-\frac12\right)X+\sqrt a\,Z
\right],
\label{eq:grover_affine_crossing_app}
\end{equation}
with gap
\begin{equation}
g_a(s)=
\sqrt{a+4(1-a)\left(s-\frac12\right)^2}.
\label{eq:grover_gap_app}
\end{equation}
Thus local adiabatic Grover search is another affine two-level avoided crossing.

Define the Roland--Cerf coordinate
\begin{equation}
\Theta_a(s)=\int_0^s\frac{du}{g_a(u)^2}.
\label{eq:grover_theta_app}
\end{equation}
A direct integration gives
\begin{equation}
\Theta_a(1)
=
\frac{\arctan\sqrt{(1-a)/a}}{\sqrt{a(1-a)}}
=
\Theta\!\left(\sqrt{\frac{N}{M}}\right).
\label{eq:grover_theta_length_app}
\end{equation}
Choose the DC-clocked local schedule by
\begin{equation}
\Theta_a(s(t))=\Theta_a(1)F_q(t/T),
\qquad
T=A\Theta_a(1),
\label{eq:grover_dc_schedule_app}
\end{equation}
up to an immaterial constant rescaling of \(A\).  To verify the hypotheses of Theorem~\ref{thm:weighted_global}, introduce the Bloch angle
\begin{equation}
\theta(s)=
\arctan\!\left(
\frac{2\sqrt{1-a}(s-\frac12)}{\sqrt a}
\right),
\qquad
\theta_0=\arctan\sqrt{\frac{1-a}{a}} .
\end{equation}
Then
\begin{equation}
d\Theta_a=\frac{d\theta}{2\sqrt{a(1-a)}},
\qquad
\theta(t)=-\theta_0+2\theta_0F_q(t/T),
\label{eq:grover_angle_clock_app}
\end{equation}
and the normalized traceless Hamiltonian is a fixed-gap two-level path
\begin{equation}
K_a(\theta)
=
-\sin\theta\,X-\cos\theta\,Z
\end{equation}
with eigenvalues \(\pm1\).  In the corresponding gap-time variable, \(d/dz\) has the form
\begin{equation}
\frac{d}{dz}
=
c_a(\theta)\partial_x
+
2\theta_0c_a(\theta)F_q'(x)\partial_\theta,
\qquad
x=t/T,
\label{eq:grover_L_app}
\end{equation}
where
\begin{equation}
c_a(\theta)=\frac{2\sqrt{1-a}\cos\theta}{\theta_0}
\end{equation}
after absorbing a constant into \(A\).  Since \(a\le1/2\), we have
\(\theta_0\in[\pi/4,\pi/2]\), so \(c_a\), its \(\theta\)-derivatives, and the coefficient \(2\theta_0c_a\) are uniformly bounded.  The same induction as in Lemma~\ref{lem:K_weighted_bounds} therefore gives
\begin{equation}
\|K_a^{(n)}\|_\infty+\|K_a^{(n)}\|_1
\le
CR^nM_n,
\qquad n\ge1,
\label{eq:grover_derivative_bounds_app}
\end{equation}
with constants independent of \(N\) and \(M\).  The \(L^1\) estimate again uses only the total angle variation
\begin{equation}
\int_0^1 |\theta'(x)|\,dx=2\theta_0\le\pi,
\end{equation}
and hence does not pay for the long Roland--Cerf coordinate length
\(\Theta_a(1)=\Theta(\sqrt{N/M})\).  Endpoint flatness follows from the fact that \(F_q\) is constant near \(x=0,1\).

Applying Theorem~\ref{thm:weighted_global} gives leakage amplitude
\begin{equation}
\eta_G(A)
\le
C\exp\!\left[
-c\frac{A}{[\log(e+A)]^q}
\right].
\end{equation}
At \(s=1\), the ground state in the invariant two-dimensional subspace is \(\ket{w}\), which lies in the marked subspace.  Hence the probability of measuring an unmarked item is bounded, after adjusting constants, by
\begin{equation}
p_{\rm fail}
\le
C\exp\!\left[
-c\frac{A}{[\log(e+A)]^q}
\right],
\qquad
T=A\Theta_a(1)
=
\Theta\!\left(A\sqrt{\frac{N}{M}}\right).
\label{eq:grover_failure_app}
\end{equation}
Thus the DC-clocked Roland--Cerf schedule gives a single-run, high-confidence adiabatic Grover evolution with near-exponential failure suppression in the dimensionless adiabatic parameter \(A\), assuming the marked fraction \(M/N\) is known.

\end{document}